\newtheorem{theorem}{Theorem}
\newtheorem{remark}{Remark}
\newtheorem{proposition}{Proposition}
\newtheorem{example}{Example}
\newtheorem{lemma}{Lemma}
\newtheorem{definition}{Definition}
\newcommand{\bF}{\mathbb{F}}
\newcommand{\bR}{\mathbb{R}}
\newcommand{\cA}{\mathcal{A}}
\newcommand{\cB}{\mathcal{B}}
\newcommand{\cC}{\mathcal{C}}
\newcommand{\cD}{\mathcal{D}}
\newcommand{\cK}{\mathcal{K}}
\newcommand{\cL}{\mathcal{L}}
\newcommand{\cR}{\mathcal{R}}
\newcommand{\cS}{\mathcal{S}}
\newcommand{\cV}{\mathcal{V}}
\newcommand{\boldb}{\mathbf{b}}
\newcommand{\boldc}{\mathbf{c}}
\newcommand{\bolde}{\mathbf{e}}
\newcommand{\boldi}{\mathbf{i}}
\newcommand{\boldj}{\mathbf{j}}
\newcommand{\boldk}{\mathbf{k}}
\newcommand{\boldl}{\mathbf{l}}
\newcommand{\boldm}{\mathbf{m}}
\newcommand{\boldo}{\mathbf{o}}
\newcommand{\boldr}{\mathbf{r}}
\newcommand{\bolds}{\mathbf{s}}
\newcommand{\boldv}{\mathbf{v}}
\newcommand{\boldw}{\mathbf{w}}
\newcommand{\boldx}{\mathbf{x}}
\newcommand{\boldy}{\mathbf{y}}
\DeclareMathOperator{\FtwoSpan}{\operatorname{span}_{\bF_2}}
\DeclarePairedDelimiter{\floor}{\lfloor}{\rfloor} 
\DeclarePairedDelimiter{\ceil}{\lceil}{\rceil} 
\DeclareSymbolFont{bbold}{U}{bbold}{m}{n}
\DeclareSymbolFontAlphabet{\mathbbold}{bbold}
\newcommand{\1}{\mathbbold{1}}
\title{Linear Codes for Hyperdimensional Computing}
\author{\textbf{Netanel Raviv}\\
	\normalsize 
	Department of Computer Science and Engineering, Washington University in St. Louis, St. Louis, MO}
\begin{document}
\maketitle
\thispagestyle{empty}
\begin{abstract}
	Hyperdimensional Computing (HDC) is an emerging computational paradigm for representing compositional information as high-dimensional vectors, and has a promising potential in applications ranging from machine learning to neuromorphic computing. One of the long-standing challenges in HDC is factoring a compositional representation to its constituent factors, also known as \textit{the recovery problem}. In this paper we take a novel approach to solve the recovery problem, and propose the use of \textit{random linear codes}. These codes are subspaces over the Boolean field, and are a well-studied topic in information theory with various applications in digital communication. We begin by showing that hyperdimensional encoding using random linear codes retains favorable properties of the prevalent (ordinary) random codes, and hence HD representations using the two methods have comparable information storage capabilities. We proceed to show that random linear codes offer a rich subcode structure that can be used to form key-value stores, which encapsulate most use cases of HDC. Most importantly, we show that under the framework we develop, random linear codes admit simple recovery algorithms to factor (either bundled or bound) compositional representations. The former relies on constructing certain linear equation systems over the Boolean field, the solution to which reduces the search space dramatically and strictly outperforms exhaustive search in many cases. The latter employs the subspace structure of these codes to achieve provably correct factorization. Both methods are strictly faster than the state-of-the-art resonator networks, often by an order of magnitude. We implemented our techniques in Python using a benchmark software library, and demonstrate promising experimental results.
\end{abstract}

\section{Introduction}
Hyperdimensional Computing (HDC, also known as Vector Symbolic Architecture) is an emerging computational paradigm which aims to blur the ubiquitous location-based representation of information, where meaning of bits is tied to their relative position~\cite{kanerva2014computing}. Unlike traditional computing, in which data structures are constructed using context-based positioning of objects (e.g., lists, trees, vectors, etc.), the HDC paradigm identifies atomic items as high-dimensional vectors of context-free bits, which are then used as building blocks to form more sophisticated data structures---again high-dimensional (HD) vectors---through various algebraic operations. The HDC framework is motivated by a vast spectrum of scientific challenges and engineering applications, ranging from explanatory models of cognition \cite{plate2003holographic,gayler2004vector,lake2017building}, to fundamental tasks in machine learning \cite{yu2022understanding,imani2018hierarchical,thomas2022streaming,rahimi2016robust}, and to various applications such as DNA sequencing \cite{imani2018hdna}, speech recognition \cite{imani2017voicehd}, machine learning hardware \cite{schmuck2019hardware}, robotics \cite{neubert2019introduction}, and more. HDC attracted significant research efforts of late, with several theoretical studies \cite{thomas2021theoretical,clarkson2023capacity} and surveys \cite{kleyko2023survey,aygun2023learning,chang2023recent}.

The HDC paradigm is further divided into different \textit{architectures}, each of which specifies an \textit{alphabet}, a \textit{bundling} operator, and a \textit{binding} operator. The choice of alphabet specifies the entries of the atomic HD vectors; common choices are ``dense'' (i.e.,~$\{\pm1\}$), ``sparse'' (i.e.~$\{0,1\}$), or in some cases complex numbers~\cite{gallant2013representing,yu2022understanding} (see also~\cite{plate1995holographic}). A bundling operator receives as input two HD vectors and outputs an HD vector similar to both; common choices are the real-valued addition \cite{thomas2021theoretical,clarkson2023capacity}, Boolean OR \cite{yu2022understanding}, or the the majority operator \cite{schmuck2019hardware}. A binding operator receives as input two HD vectors and outputs an HD vector dissimilar to both. By and large, all works in HDC specify binding as point-wise (or Hadamard) product, which in the case of dense (i.e.,~$\{\pm1\}$) representation is equivalent to exclusive OR. 

Clearly, the choice of a specific architecture strongly depends on the specification of the system in which HDC is implemented. In this paper we focus on the most prevalent architecture, commonly referred to as Multiply-Add-Permute (MAP, although our methods obviate the need for the permutation operation), in which the HD vectors are dense, the bundling operator is real-valued addition~``$+$'', and the binding operator is exclusive-OR~``$\oplus$''.	

Various computational challenges arise in the context of the system specification in which HDC is employed, among which are recall (i.e., identifying if a given object is stored in memory), comparison (i.e., how ``similar'' are two given representations), or training on HD representations~\cite{thomas2022streaming}. Recall and comparison were theoretically analyzed by~\cite{thomas2021theoretical}, and shown to be strongly correlated with a property called \textit{incoherence}, which encapsulates the extent to which different HD representations are dissimilar, and stems from quasi-orthogonality~\cite{kainen1993quasiorthogonal}. One of the most important open problems in HDC, that is the first open problem mentioned in the encompassing survey of~\cite{kleyko2023survey}, is that of \textit{recovery}. In various applications one is given a representation of some complex data structure as input, and needs to factor it to its constituent elements; the input could be the result of bundling multiple HD vectors (i.e., \textit{bundling-recovery}) and/or binding such vectors (i.e., \textit{binding-recovery}). Recovery is a crucial ingredient in some important HDC systems such as general key-value stores (which instantiate most other data structure as special cases), or more domain specific applications such as visual scene analysis or search tree queries~\cite{frady2020resonator}. Several heuristics have been proposed to handle the recovery problem~\cite{frady2020resonator,kleyko2023efficient}, and yet, no rigorous approach is known as of yet.

With our main motivation being efficient recovery algorithms, in this paper we propose the use of \textit{random linear codes} in HDC. The term ``linear codes'' refers to subspaces over the Boolean field with large pairwise Hamming distance. These objects are well studied in information and coding theory, mainly for error correction in communication systems~\cite{roth2006introduction}. Choosing linear codes at random is attractive due to the high probability of obtaining a code with favorable properties, such as large Hamming distance. 

We begin by showing that essentially \textit{nothing is lost} by specifying the HDC encoding process to random linear codes. That is, the information capacity of HDC systems, as measured by~\cite{thomas2021theoretical} using the incoherence property mentioned earlier, is not adversely affected by imposing a linear structure on the encoding process. In fact, random linear codes admit a \textit{better} bound than nonlinear ones on the probability of choosing a suitable code. Furthermore, encoding with any linear code can be done using simple logical operations (specifically, exclusive OR and AND), can be implemented in software/hardware with ease, and requires exponentially less space to store. Further, linear codes can also support choosing codewords at random on-the-fly, which is central to the HDC paradigm, with marginal additional effort. 

Then, we proceed to show that \textit{much is gained} by the use of linear codes in HDC, specifically, in three aspects:
\begin{enumerate}
	\item Simple implementation of key-value stores, which provides a unified approach for studying and implementing many HD data structures.
	\item Drastic reduction of the size of the search space in the bundling-recovery problem.
	\item Efficient and provably correct solution to the binding-recovery problem.
\end{enumerate}
Our bundling-recovery algorithm relies on exhaustive search over a certain subset of codewords, albeit exponentially smaller than the size of the codebook in its entirety. Our binding-recovery algorithm, however, does not rely on exhaustive search, nor on any type of heuristics, and in fact, applies to all linear codes regardless of their incoherence properties. It relies on the simple linear-algebraic fact that a vector in a subspace has a unique representation over a basis of that subspace, which can be found in linear time by solving a system of linear equations. We implemented both algorithms using off-the-shelf Python libraries, and showed promising results. 

Specifically, our bundling-recovery algorithm shows drastic reduction of run-time compared to exhaustive search, and surprisingly, also consistently succeeds in cases where exhaustive search fails (e.g., due to insufficient incoherence). Additionally, our binding-recovery algorithm offers an almost-instantaneous and correct solution to all linear codes, and is far superior to a recent benchmark implementation~\cite{heddes2023torchhd} of a technique called \textit{resonator networks}~\cite{frady2020resonator,kent2020resonator}. For example, our algorithm normally concludes within less than a second, whereas resonator networks often require several minutes only to create a list of all codewords. We also comment that due to the specific way we use linear codes, combinations of bundled and bound vectors can also be recovered. For example, to recover all codewords in a bundle of bound vectors one can first apply our bundling-recovery algorithm, and then apply the binding-recovery algorithm on each of the resulting bound vectors.

This paper is structured as follows. In Section~\ref{section:preliminaries} we provide a brief introduction to HDC, including the analysis of~\cite{thomas2021theoretical} for the connection between information capacity and incoherence. Also in Section~\ref{section:preliminaries}, we provide a primer on Boolean linear algebra and linear codes, and iterate the (known) fact that random linear codes have high information capacity (i.e., low-coherence), with a better probability bound than not-necessarily-linear random codes. In Section~\ref{section:KV} it is shown how the subcode structure of linear codes can accommodate a simple implementation of key-value stores, which instantiate most use-cases of HDC as special cases. To illustrate the strengths of our unified approach for building key-value stores, we provide several examples for special cases where our approach specifies to finite sets, vectors (sequences), search trees, and visual scene analysis. In Section~\ref{section:recovery} we present our bundling- and binding-recovery algorithms, and formally prove their guarantees. Experiments are provided in Section~\ref{section:experiments}, and conclusions are drawn in Section~\ref{section:discussion}.

\section{Preliminaries}\label{section:preliminaries}

We begin with an introduction to HDC in Section~\ref{section:preliminariesHDC}, where we present a theoretical perspective to HDC and focus on the recovery problem. We proceed with a primer on linear codes, in which we familiarize the reader with Boolean algebra and coding theory, and present several similarities between HDC and coding theory.

\subsection{Hyperdimensional Computing}\label{section:preliminariesHDC}
The Hyperdimensional Computing (HDC) paradigm replaces conventional representation of data (e.g., vectors, sets, trees, etc.) by the use of high-dimensional random vectors of length~$n$ (typically $1000\le n\le 10000$, although smaller values are also common). Each participating element, e.g., a pixel in an image, a word in a language, etc., is assigned an \textit{atomic} random vector of length~$n$. The use of random vectors is a key component of HDC, as random vectors present a \textit{quasi-orthogonal} property, in the sense that due to concentration of measures the inner product between random~$\{\pm1\}$-vectors is highly likely to be close to zero~\cite{kainen1993quasiorthogonal}.  

More complex data structures, such as sentences or images, are obtained via algebraic manipulations of atomic vectors. Many frameworks were put forward to facilitate these manipulations, of which we focus on the most common Multiply, Add, Permute (MAP)~\cite{gayler2004vector}. By and large all such frameworks contain a ``binding'' and ``bundling'' operations. In the MAP framework the bundling operation, used to produce a vector similar to both input vectors, is implemented using the ordinary addition of integers, denoted by~`$+$'. Binding operation is implemented by point-wise (or Hadamard) product, denoted by~`$\oplus$'. 

For example, in order to represent a set containing the elements~$e_1,\ldots,e_s$, taken from some arbitrary domain, one encodes each~$e_i$ to an HD vector~$\bolde_i$ using a randomly generated codebook~$\cC$, and bundles all resulting HD vectors together as~$\bolds=\sum_{i=1}^s\bolde_s$. Later, to perform \textit{recall}, i.e., to determine if a given element~$e$ belongs to the set~$\{e_i\}_{i=1}^s$, one uses the same codebook~$\cC$ to encode~$e$ to an HD vector~$\bolde$, and decides according to the value of~$\bolde\bolds^\intercal$. It was shown in~\cite{thomas2021theoretical} that successful recall depends on the \textit{incoherence} of the codebook~$\cC$. 

\begin{definition}\label{definition:muIncoherentCodes}
	For~$\mu>0$, a codebook~$\cC\subseteq \{\pm1\}^n$ is called $\mu$-incoherent if~$\boldc\boldc'^\intercal\le \mu n$ for every distinct~$\boldc,\boldc'\in\cC$.
\end{definition}

Furthermore, if one uses a~$\mu$-incoherent code~$\cC$ with~$\mu<\tfrac{1}{2s}$, then the truth value of~``$e\in\{e_i\}_{i=1}^s$'' is identical to the truth value of~``$\bolde\bolds^\intercal\ge n/2$'' \cite[Thm.~2]{thomas2021theoretical}.

More complex data structures than sets can be given a useful HD representation using the binding operation~$\oplus$. For example, key-value (KV) stores are data structures which contain pairs of keys taken from a set~$K$, and values taken from a set~$V$ (e.g., a phonebook is a KV-store which contains pairs of the form~$(\text{name, \text{number}})$). An HD representation of KV-stores requires two codebooks~$\cK,\cV$ to map every key~$k\in K$ to an HD vector~$\boldk\in\cK$, and every value~$v\in V$ to an HD vector $\boldv\in \cV$. A KV-store~$\{(k_i,v_i)\}_{i=1}^s$ is mapped to the HD vector~$\bolds=\sum_{i=1}^s\boldk_i\oplus\boldv_i$. 

A KV-store should support queries of the form ``\textit{what is the value of key~$k$?}''. To answer this query in HDC one finds the element~$v\in V$ whose respective~$\boldv\in \cV$ maximizes the value of~$\boldv(\bolds\oplus\boldk)^\intercal$. It was shown in~\cite[Thm~.14]{thomas2021theoretical} that successful answer to such queries is guaranteed if some ``joint'' incoherence of the codebooks~$\cK,\cV$ holds, i.e., if~$\boldv(\boldk\oplus\boldv')^\intercal\le\mu n$ for every~$\boldv,\boldv'\in\cV$ and every~$\boldk\in\cK$. Our proposition of using random linear codes for HDC obviates this separate treatment of coherence by considering~$\cK,\cV$ as \textit{subcodes} of a $\mu$-incoherent linear code~$\cC$, as will be explained in detail in the sequel.

HDC often requires the use of \textit{recovery} algorithms, which factor a complex representation to its constituent atomic vectors. Arguably the simplest use case is the aforementioned query to a KV-store, where one seeks to recover the elements of~$\bolds\oplus\boldk$ in search of a unique~$\boldv\in\cV$ that is the answer to the query. Another use-case appears in the context of machine learning, where a neural-network is trained to map a complex image to a complex HD vector which represents the identity and location of objects in the image (e.g., \cite[Sec.~4.2]{frady2020resonator}); recovery algorithms are then required to interpret the output of the neural network.

Formally, recovery algorithms can be partitioned to two types, \textit{bundling-recovery} and \textit{binding-recovery}, or $\Sigma$-recovery and~$\oplus$-recovery, respectively, for short. In~$\Sigma$-recovery one is given a bundle of HD vectors~$\bolds=\sum_{i=1}^s\bolde_i$ and wishes to obtain the~$\bolde_i$'s; the~$\bolde_i$'s might be complex HD vectors themselves (rather than atomic). In~$\oplus$-recovery one is given a bound HD vector~$\bolds=\bigoplus_{i=1}^s\boldc_i$, where each~$\boldc_i$ is taken from a different codebook~$\cC_i$, and wishes to obtain the~$\boldc_i$'s. 

A common approach toward~$\Sigma$-recovery of a bundle~$\bolds=\sum_{i=1}^s\bolde_i$ is \textit{exhaustive search}. As mentioned earlier, if the codebook~$\cC$ is sufficiently incoherent, one can traverse all~$\boldc\in \cC$ and decide if~$\boldc\in\{\bolde_i\}_{i=1}^s$ according to the value of~$\bolds\boldc^\intercal$. However, this requires~$|\cC|$ many inner products, and does not scale well with large codebooks. We emphasize that while the number of elements~$s$ is usually small (e.g., the types of animals in a picture), the size of~$\cC$ might be very large (e.g., all types of animals on earth). 

The state-of-the-art~$\oplus$-recovery method for a bound vector~$\bolds=\bigoplus_{i=1}^s\boldc_i$ is \textit{resonator-networks}~\cite{frady2020resonator,kent2020resonator}, which can be seen as inspired by Hopfield networks and associative memories. Although some variants exist \cite{kent2020resonator,hersche2023decoding}, the main gist is as follows. A resonator network maintains an approximate vector~$\hat{\boldc}_i(t)$ for each~$\boldc_i$ and each step~$t$ of the algorithm, which is initialized as~$\hat{\boldc}_i(0)=\sum_{\boldc\in\cC_i}\boldc$. In a series of computation steps, each~$\hat{\boldc}_i(t)$ is updated as~$\hat{\boldc}_i(t+1)=\operatorname{sign}(C_iC_i^\intercal (\bolds\oplus \bigoplus_{j\in[s]\setminus\{i\}}\hat{\boldc}_j(t))^\intercal)$, where~$C_i$ is a matrix whose rows are the HD vectors in~$\cC_i$. The rationale is that~$C_iC_i^\intercal$ is a projection matrix on the subspace spanned by the codebook~$\cC_i$ over~$\bR$, and binding the approximations~$\hat{\boldc}_j(t)$ for~$j\ne i$ with~$\bolds$ cancels the effect of~$\boldc_j$ on~$\bolds$ and isolates the required~$\boldc_i$. Resonator networks demonstrate experimental success, and yet the author is not aware of any theoretical convergence guarantees. 

To better elucidate the contribution of this paper, the following remarks are in place. 
\begin{remark}[Exact vs. approximate] 
	We focus on exact HD computations, where the answer to a certain computation must be correct in all cases, rather than with high probability. For instance, in the recall case (see above), we are interested in constructing algorithms which determine the truth value of~``$e\in\{e_i\}_{i=1}^s$'' in all cases (for a certain parameter regime). Attaining approximate algorithms is left for future work. 
\end{remark}	

\begin{remark}
	To prevent ambiguity between coding-theory and HDC notions, we set the following conventions. A \emph{code} is a subset~$\cC\subseteq\{\pm1\}^n$ (rather than ``codebook''). A vector~$\boldv\in\{\pm1\}^n$ is called \emph{binary} (rather than ``bipolar''), and the values~$\{\pm1\}$ are called \emph{bits}. An element~$\boldc\in\cC$ of a code is called a \emph{codeword}, and an element~$\boldw\in\{\pm1\}^n$, which does not necessarily belong to any spcified code, is called a \emph{vector} (or an HD vector, if we wish to emphasize its high dimension~$n$). Some sources (e.g., \cite{thomas2021theoretical}) identify a code~$\cC$ with an encoding function which maps elements to codewords. It will be clear shortly that in linear codes, such encoding functions can be easily and efficiently implemented using Boolean-algebraic operations, and hence we omit the encoding function from the discussion. 
\end{remark}

\subsection{A primer on linear codes}\label{section:preliminariesCodes}
We provide a short introduction to the theory of error correcting codes; for further information the interested reader is referred to~\cite{roth2006introduction}. Linear codes are defined via Boolean algebra, i.e., algebraic operations over the binary field. Traditionally, the binary field~$\bF_2$ contains two abstract elements~``$0$'' and~``$1$'', but here we represent those elements using the real numbers~$1$ and~$-1$, where the real~$-1$ represents the Boolean~``$1$'' and the real~$1$ represents the Boolean~``$0$'' (i.e.~$\text{real}=(-1)^{\text{Boolean}}$). The Boolean field has an ``addition'' operation\footnote{The choice of the notation~`$\oplus$' here deliberately coincides with the operator~$\oplus$ used for the binding operation in Section~\ref{section:preliminariesHDC} for reasons that will be made clear in the sequel.}~$\oplus$ and a ``multiplication'' operation~$\odot$, which are defined using the Boolean functions excluside OR (XOR) and AND, respectively, see Table~\ref{table:addition} and Table~\ref{table:multiplication}. Notice that using~$\{\pm1\}$ to represent the Boolean field enables Boolean addition (i.e., XOR) to be implemented using multiplication over~$\bR$.

\begin{table}[ht]
	\begin{minipage}[b]{0.45\linewidth}
		\centering
		\begin{tabular}{|c|c|c|}
			\hline
			$\oplus$ & $1$ & $-1$ \\
			\hline
			$1$ & $1$ & $-1$ \\
			\hline
			$-1$ & $-1$ & $1$ \\
			\hline
		\end{tabular}
		\vspace{0.3cm}
		\caption{Addition over~$\bF_2$.}\label{table:addition}
	\end{minipage}
	\hspace{0.5cm}
	\begin{minipage}[b]{0.45\linewidth}
		\centering
		\begin{tabular}{|c|c|c|}
			\hline
			$\odot$ & $1$ & $-1$ \\
			\hline
			$1$ & $1$ & $1$ \\
			\hline
			$-1$ & $1$ & $-1$ \\
			\hline
		\end{tabular}
	\vspace{0.3cm}
		\caption{Multiplication over~$\bF_2$.}\label{table:multiplication}
	\end{minipage}
\end{table}

Once the structure of the Boolean field~$\bF_2$ is established, one can define algebraic notions in~$\bF_2^n$, i.e., the vector space of dimension~$n$ over~$\bF_2$. That is, one can freely consider matrices, subspaces, linear transformation etc., which have (nearly) identical properties to their~$\bR^n$ counterparts. For instance, the notation~$\odot$ can be extended to multiplication of matrices in a natural way, e.g., for~$\boldx,\boldy\in\bF_2^n$ let~$\boldx\odot\boldy^\intercal=\bigoplus_{i=1}^nx_i\odot y_i$. When discussing matrix multiplication, we omit the~$\odot$ notation if it is clear from the context that the product is defined over~$\bF_2$. Similarly, linear span is defined over~$\bF_2$, rather than over~$\bR$---for vectors~$\boldc_1,\ldots,\boldc_k$ we let~$\FtwoSpan\{\boldc_1,\ldots,\boldc_k\}$ be their linear span with~$\bF_2$ coefficients, i.e., the set~$\{\alpha_1\odot\boldc_1\oplus\ldots\oplus\alpha_k\odot\boldc_k\vert \alpha_i\in\bF_2\}$, or equivalently, the set~$\{ \bigoplus_{j\in J}\boldc_j\vert J\subseteq [k] \}$, where~$[k]$ denotes~$\{1,2,\ldots,k\}$. Linear independence of vectors also follows naturally from the definition of linear span.

In coding theoretic jargon, a (binary) \textit{code} is a subset of~$\bF_2^n$. A code consists of Boolean vectors of length~$n$ called \textit{codewords}. A code is called \textit{linear} if it is a subspace of~$\bF_2^n$. The dimension~$\dim_{\bF_2}(\cC)$ of a code~$\cC$ is usually denoted in the literature by~$k$, and the size~$|\cC|$ of a $k$-dimensional binary code~$\cC$ is~$2^k$. A linear code which is a~$k$-dimensional subspace of~$\bF_2^n$ is said to be an~``$[n,k]_2$ code.'' 

An~$[n,k]_2$ code~$\cC$ can be defined in two equivalent ways. One, using a \textit{generator matrix}~$G\in\bF_2^{k\times n}$, whose rows are a basis for~$\cC$, i.e.,~$\cC=\{\boldx \odot G\vert \boldx\in\bF_2^k\}$. Two, using a \textit{parity check matrix}~$H\in\bF_2^{(n-k)\times n}$, whose right kernel is~$\cC$, i.e.,~$\cC=\ker H=\{\boldc\vert H\odot \boldc^\intercal=\1\}$, where~$\1\in\bF_2^{n-k}$ is the ``zero vector,'' i.e., contains only\footnote{Recall that we use the~$\pm1$ representation of~$\bF_2$, and hence the Boolean~``$0$'' is represented by~$1$. In this spirit, we say that a vector~$\boldy$ is ``nonzero'' if~$\boldy\ne \1$.}~$1$'s. Generator matrices provide a convenient encoding algorithm, where a Boolean string~$\boldx\in\bF_2^k$, which represents information to be encoded, is encoded to~$\boldx G\in\bF_2^n$. Parity-check matrices provide a convenient way to \textit{test} if a given vector~$\boldc$ belongs to a given code~$\cC$ by the truth value of the expression~``$H\boldc^\intercal=\1$.'' 

\begin{example}\label{example:linearcodefordummies}
	The following code~$\cC$ is a~$[5,2]_2$ code, i.e., a code of length~$5$, dimension~$\dim_{\bF_2}(\cC)=2$, and size~$2^{\dim_{\bF_2}(\cC)}=4$.
	\begin{align*}
		\cC=\{(1,1,1,1,1),(-1,-1,1,1,1),(-1,1,-1,-1,-1),(1,-1,-1,-1,-1)\}.
	\end{align*}
	A generator matrix of~$\cC$ is given by
	\begin{align*}
		G=
		\begin{bmatrix}
			-1&-1&\phantom{-}1&\phantom{-}1&\phantom{-}1\\
			-1&\phantom{-}1&-1&-1&-1
		\end{bmatrix}\in\bF_2^{2\times 5},
	\end{align*}
	whose rows are an~$\bF_2$-basis (of size~$\dim_{\bF_2}(\cC)=2$) to~$\cC$. It can be verified that the following matrix is a parity-check matrix for~$\cC$, i.e.~$\cC=\{\boldc\in\bF_2^5\vert H\odot \boldc^\intercal=\1\}$.
	\begin{align*}
		H = 
		\begin{bmatrix}
			-1&-1&\phantom{-}1&\phantom{-}1&-1\\
			-1&-1&\phantom{-}1&-1&\phantom{-}1\\
			-1&-1&-1&\phantom{-}1&\phantom{-}1
		\end{bmatrix}\in\bF_2^{3\times 5}
	\end{align*}
\end{example}

Since linear codes are subspaces, they contain subspaces themselves, and these are called \textit{subcodes}. In what follows, for a linear code~$\cC$ we write~$\cC=\cK\times\cV$ if~$\cK$ and~$\cV$ are subcodes (i.e., subspaces) of~$\cC$ and every~$\boldc\in\cC$ can be written uniquely as~$\boldc=\boldk\oplus\boldv$ with~$\boldk\in\cK$ and~$\boldv\in\cV$ (i.e., a direct sum); such subcodes~$\cK$ and~$\cV$ must also intersect trivially (i.e.,~$\cK\cap \cV=\{\1\}$). 

\begin{example}
	In Example~\ref{example:linearcodefordummies}, we have that
	\begin{align*}
		\cC_1=\{ (1,1,1,1,1),(-1,-1,1,1,1) \}\mbox{ and }
		\cC_2=\{  (1,1,1,1,1),(-1,1,-1,-1,-1)\}
	\end{align*}
	are subcodes of~$\cC$. It can be verified that~$\dim_{\bF_2}(\cC_1)=\dim_{\bF_2}(\cC_2)=1$ and that~$\cC=\cC_1\times\cC_2$. More broadly, for positive integers~$\ell<k$, suppose~$\cC=\FtwoSpan\{\boldc_1,\ldots,\boldc_k\}$ for some vectors~$\boldc_1,\ldots,\boldc_k$ which are linearly independent over~$\bF_2$. Then, the subcodes~$\cK=\FtwoSpan\{ \boldc_1,\ldots,\boldc_{\ell} \}$ and~$\cV= \FtwoSpan\{ \boldc_{\ell+1},\ldots,\boldc_{k} \}$ satisfy~$\cC=\cK\times \cV$.
\end{example}

In addition, recall the notion of \textit{affine subspaces}, which are subspaces that are shifted by a vector. In the context of linear codes, such affine subspaces are called \textit{cosets}. For instance, for a Boolean vector~$\boldy$ the set~$\cC\oplus \boldy=\{\boldc\oplus\boldy\vert \boldc\in\cC\}$ is a coset of~$\cC$ (and also an affine subspace of~$\bF_2^n$). It is known that~$\cC\oplus\boldy=\cC$ if and only if~$\boldy\in\cC$, that different cosets are disjoint, and that the union of all cosets of a code~$\cC$ is the entire space~$\bF_2^n$. Furthermore, the following property of subcodes and cosets will be useful in the sequel for constructing HD-representations of key-value stores.

\begin{lemma}\label{lemma:notinV}
	Let~$\cC=\cK\times \cV$,~$\boldv\in\cV$, and~$\boldk\in \cK$. If~$\boldk\ne \1$, then~$\boldk\oplus\boldv\notin\cV$.
\end{lemma}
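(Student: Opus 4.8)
The plan is to argue by contradiction using the defining property of a direct sum $\cC = \cK \times \cV$, namely that every element of $\cC$ has a \emph{unique} decomposition as $\boldk' \oplus \boldv'$ with $\boldk' \in \cK$ and $\boldv' \in \cV$. Suppose, for contradiction, that $\boldk \oplus \boldv \in \cV$. First I would observe that $\boldk \oplus \boldv$ is an element of $\cC$, since $\boldk \in \cK \subseteq \cC$, $\boldv \in \cV \subseteq \cC$, and $\cC$ is closed under $\oplus$ (being a subspace).

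Next, I would exhibit two decompositions of this element as a sum of something in $\cK$ and something in $\cV$. On one hand, $\boldk \oplus \boldv$ is manifestly $\boldk \oplus \boldv$ with first component $\boldk \in \cK$ and second component $\boldv \in \cV$. On the other hand, under our contradiction hypothesis $\boldk \oplus \boldv \in \cV$, so we may also write $\boldk \oplus \boldv = \1 \oplus (\boldk \oplus \boldv)$ with first component $\1 \in \cK$ (the zero vector lies in every subspace) and second component $\boldk \oplus \boldv \in \cV$. By uniqueness of the decomposition in a direct sum, the $\cK$-components must agree, forcing $\boldk = \1$, which contradicts the hypothesis $\boldk \ne \1$. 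Hence $\boldk \oplus \boldv \notin \cV$.

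An essentially equivalent phrasing, which I might prefer for brevity, uses only the trivial-intersection property $\cK \cap \cV = \{\1\}$ recorded just before the lemma: if $\boldk \oplus \boldv \in \cV$, then since $\boldv \in \cV$ and $\cV$ is a subspace we get $\boldk = (\boldk \oplus \boldv) \oplus \boldv \in \cV$ (using that $\boldv \oplus \boldv = \1$ over $\bF_2$), so $\boldk \in \cK \cap \cV = \{\1\}$, i.e. $\boldk = \1$, again a contradiction. There is no real obstacle here — the only things to be careful about are bookkeeping with the $\pm 1$ representation of $\bF_2$ (so that $\1$ is the additive identity and $\boldv \oplus \boldv = \1$) and invoking the direct-sum / trivial-intersection facts exactly as stated in the paragraph preceding the lemma. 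I would present the second, shorter argument in the paper.
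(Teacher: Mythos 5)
Your proposal is correct, and the second (shorter) argument you say you would present is essentially identical to the paper's own proof: both deduce from $\boldk\oplus\boldv\in\cV$ that $\boldk$ itself lies in $\cV$, hence is a nonzero element of $\cK\cap\cV=\{\1\}$, a contradiction. The first argument via uniqueness of the direct-sum decomposition is a valid equivalent phrasing but adds nothing beyond the trivial-intersection route the paper already takes.
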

\begin{proof}
	Assume for contradiction that~$\boldk\oplus\boldv\in\cV$, which implies that~$\boldk=\boldv\oplus\boldv'$ for some~$\boldv'\in\cV$. If~$\boldv=\boldv'$ then~$\boldk=\1$, a contradiction, and hence~$\boldv\ne\boldv'$. Therefore,~$\boldk$ is a nonzero vector which lies in~$\cK\cap \cV=\{\1\}$, also a contradiction.
\end{proof}

An important property of codes is their \textit{minimum Hamming distance} (minimum distance, for short). The minimum distance of a code~$\cC$ is the minimum value of~$d_H(\boldc,\boldc')$, where~$d_H$ denotes Hamming distance, and where~$\boldc,\boldc'$ range over all pairs of distinct codewords of~$\cC$. A well-known property of linear codes, which is essential in our analysis, is that their minimum distance is equal to their \textit{minimum Hamming weight} (minimum weight, for short). That is, for a linear code~$\cC$ we have $\min_{\boldc\in\cC\setminus\{\1\}}w_H(\boldc)=\min_{\boldc,\boldc'\in\cC,\boldc\ne\boldc'}d_H(\boldc,\boldc')$, where~$w_H$ denotes Hamming weight (i.e., the Hamming distance from~$\1$, or equivalently, the number of~$-1$ entries). Similarly, a linear code's maximum distance is equal to its maximum weight. For instance, in Example~\ref{example:linearcodefordummies} the minimum weight of~$\cC$ is~$2$, and the maximum weight is~$4$; by checking the Hamming distance between every pair it can be verified that the minimum distance of~$\cC$ equals its minimum weight.

\paragraph{$\mu$-incoherent codes and $\epsilon$-balanced codes} As mentioned in Section~\ref{section:preliminariesHDC}, the essence of successful HDC is using~$\mu$-incoherent codes (Definition~\ref{definition:muIncoherentCodes}), and more crucially, the fact that random (not-necessarily-linear) codes are~$\mu$-incoherent with high probability. We now demonstrate a simple fact, that~$\mu$-incoherent codes are equivalent to \textit{$\epsilon$-balanced codes}; $\epsilon$-balanced \textit{linear} codes are a well studied notion~\cite{naor1990small,alon1992simple,ben2009constructing,ta2017explicit}, with a wide range of applications in computer science.

\begin{definition}\label{definition:epsBalanced}
	For~$\epsilon>0$, a code~$\cC$ is called~$\epsilon$-balanced if $(1/2-\epsilon)n\le d_H(\boldc,\boldc')\le (1/2+\epsilon)n$ for every distinct~$\boldc,\boldc'\in\cC$. 
\end{definition}
Due to the aforementioned equivalence between minimum distance and minimum weight, as well as between maximum distance and maximum weight, for linear codes Definition~\ref{definition:epsBalanced} reduces to~$(1/2-\epsilon)n\le w_H(\boldc)\le (1/2+\epsilon)n$ for every nonzero~$\boldc\in\cC$. Since we use the~$\pm1$ representation of~$\bF_2$, we have that~$d_H(\boldx,\boldy)=\frac{1}{2}(n-\boldx\boldy^\intercal)$, and thus the following is immediate.

\begin{proposition}\label{proposition:incoherentBalanced}
	For any~$\mu>0$, a code is~$\mu$-incoherent if and only if it is~$\frac{\mu}{2}$-balanced.
\end{proposition}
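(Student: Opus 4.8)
The plan is essentially a one-line substitution, as the author's ``the following is immediate'' suggests. I would start from the identity $d_H(\boldx,\boldy)=\tfrac12(n-\boldx\boldy^\intercal)$ recorded just above, equivalently $\boldx\boldy^\intercal=n-2\,d_H(\boldx,\boldy)$, which holds for all $\boldx,\boldy\in\{\pm1\}^n$, and apply it with $\boldx=\boldc$, $\boldy=\boldc'$ as these range over the distinct pairs of codewords of $\cC$. Under this substitution the defining inequality of $\mu$-incoherence, $\boldc\boldc'^\intercal\le\mu n$, rearranges to $d_H(\boldc,\boldc')\ge\tfrac{1-\mu}{2}\,n=(\tfrac12-\tfrac{\mu}{2})\,n$, which is exactly the lower half of the condition that defines a $\tfrac{\mu}{2}$-balanced code; running the same substitution in reverse turns the upper half $d_H(\boldc,\boldc')\le(\tfrac12+\tfrac{\mu}{2})\,n$ into $\boldc\boldc'^\intercal\ge-\mu n$. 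So, reading $\mu$-incoherence as the (symmetric) bound $|\boldc\boldc'^\intercal|\le\mu n$ --- the form relevant to quasi-orthogonality, where inner products should be near $0$ on both sides --- the two properties become literal term-by-term rearrangements of one another, the factor of $2$ between $\mu$ and $\epsilon$ being inherited from the $\tfrac12$ in the distance identity.

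Since each manipulation is a genuine equivalence and is applied uniformly over every distinct codeword pair, both directions of the ``if and only if'' drop out simultaneously. The only point needing any attention --- and it is the closest the argument comes to an obstacle --- is to notice that $\epsilon$-balance is a two-sided constraint, so matching it against incoherence forces the two-sided reading of the latter; past that, the proof is pure bookkeeping of the constant $\tfrac12$ and the signs, with no combinatorial or coding-theoretic input.
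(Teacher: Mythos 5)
Your proof is correct and is exactly the argument the paper intends: substitute $d_H(\boldc,\boldc')=\tfrac{1}{2}(n-\boldc\boldc'^\intercal)$ into the two definitions and rearrange, with the factor of~$2$ between $\mu$ and $\epsilon$ coming from that identity. You are also right that the equivalence only holds if $\mu$-incoherence is read as the two-sided bound $|\boldc\boldc'^\intercal|\le\mu n$ --- Definition~\ref{definition:muIncoherentCodes} is stated one-sidedly, but the paper itself later applies the proposition in the absolute-value form --- so flagging that is a useful clarification rather than a gap.
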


Constructing explicit~$\epsilon$-balanced linear codes is notoriously difficult~\cite{naor1990small,alon1992simple,ben2009constructing,ta2017explicit}. However, choosing a uniformly random~$[n,k]_2$ linear code (by selecting a generator matrix uniformly at random) is likely to result in an~$\epsilon$-balanced code with~$\epsilon=\Theta(\sqrt{k/n})$. While this is a well-known fact, for the lack of an accessible reference we provide the full details below.

\paragraph{Random linear codes are~$\epsilon$-balanced with high probability}

Our first key message in this paper is that nothing is lost in HDC when employing random-linear, rather than just random, codes. Furthermore, one can provide a better bound for the success in choosing a suitable linear code, than the corresponding bound for not-necessarily-linear ones.

Formally, we quantify the above statement according to the resulting incoherence. A quick exercise, which requires applying Hoeffding's inequality and the union-bound and provided in~\cite[Sec.~3.1.2]{thomas2021theoretical}, shows that a random code~$\cA$ of size~$m$ and length~$n$ is~$\epsilon$-balanced with high probability (i.e., which tends to zero as the size of the code tends to infinity) for~$\epsilon=\Theta(\sqrt{\ln m/n})$. More precisely, for~$\cA$ a (not-necessarily-linear) code whose~$m$ codewords are chosen uniformly at random, we have that
\begin{align*}
	\Pr(\cA\text{ is not $\epsilon$-balanced})\le m^2e^{-2\epsilon^2n}.
\end{align*}
Notably, the~$m^2$ term comes from using the union bound over all pairs of codewords in~$\cA$. When bounding the probability of a random \textit{linear}~$[n,k]_2$ code~$\cC$ to be~$\epsilon$-balanced, however, one only needs to bound the Hamming weight of individual words. Specifically, for a linear~$[n,k]_2$ code ~$\cC$ (with~$m=2^k$ codewords) whose generator matrix~$G$ is chosen uniformly at random we have that
\begin{align*}
	\Pr(\text{$\cC$ is not $\epsilon$-balanced})&=\Pr(\exists \boldx\in\bF_2^k\setminus\{\1\}, |w_H(\boldx G)-n/2|>\epsilon n)\\
	&\le \sum_{\boldx\in\bF_2^k\setminus\{\1\}}\Pr(|w_H(\boldx G)-n/2|>\epsilon n)\\
	&\le(2^k-1)\cdot e^{-2\epsilon^2n}\le m e^{-2\epsilon^2n}.
\end{align*}
where the penultimate transition follows since~$G$ is a uniformly random matrix, hence~$\boldx G$ is a uniformly random vector whose Hamming weight expectation is~$n/2$, and thus Hoeffding's inequality is applicable. Notice that the dependence on the code size~$m$ is now only linear, rather than quadratic as in the nonlinear case. 

\noindent\textbf{Our proposition.} We propose using random linear codes for HDC. That is, we fix a desired length~$n$ and choose a generator matrix~$G\in\bF_2^{k\times n}$ uniformly at random. Then, whenever a codeword is to be chosen for a given element, choose~$\boldx\in\bF_2^k$ according to some specified rule (e.g., according to the natural representation of the element in bits), and let~$\boldx G$ be the desired codeword. Also, it is straightforward to adapt this framework to choose~$G$ on-the-fly rather than all at once a priori, and the details are discussed in the appendix. The above discussion shows that nothing is lost in terms of information content by specifying the HDC framework to random linear codes. The remainder of this paper is dedicated to demonstrating the benefits of linear codes in HDC, in the form of a unified treatment for KV-stores and recovery algorithms.

Finally, we mention that our use of the term ``decoding'' adheres to the meaning of this notion in coding theory (unlike, say,~\cite{thomas2021theoretical}), i.e., the process of recovering~$\boldx$ from (a potentially noisy version of)~$\boldx G$. Decoding without noise can be easily done by solving a linear equation over~$\bF_2$.  However, decoding random linear codes under noise is NP-hard~\cite{berlekamp1978inherent}, even though some algorithms with small exponential coefficients are known~(e.g.,~\cite{becker2012decoding}), as well as deep learning based methods for decoding linear codes~(e.g.,~\cite{nachmani2016learning}), and practically efficient decoding methods for restricted classes such as LDPC (e.g.,~\cite{chen2005reduced}).

\section{Key-value stores from linear codes.}\label{section:KV}
We now turn to show how linear codes can be used to implement hyperdimensional KV-stores. Some of the proposed techniques require the use of~$\oplus$-recovery and~$\Sigma$-recovery algorithms; for now we refer to those as black-boxes, and present them explicitly in the sequel.

A key-value (KV) store is a fundamental data structure which stores entries (values) with indices (keys), and should answer queries of the form ``\textit{what is the value for a given key?}''. In this section it is shown that linear encoding for keys and values provides a framework for maintaining KV-stores using HD vectors. We begin by presenting the basic routines which must be supported by KV-stores, showing how to implement them in practice using linear codes, proving correctness, and analyzing complexity. 

Later, we show that our framework can be extended seamlessly, again using subcode structure, to the case where either the keys or the values (or both) have a \textit{compositional} structure, i.e., where a given key (or value) can be further decomposed to atomic parts of some contextual meaning.
An example where such compositional structure is required is in the implementation of a search tree as an HD vector; since the keys represent a path in said tree, they can be further decomposed to a sequence of integers which represents that path. Finally, to show the generality of our framework we demonstrate its use in storing sets and vectors (sequences), as well as visual scene analysis and search trees.

Formally, a KV-store~$\cS$ is a collection of pairs of the form~$(k,v)$, where the keys~$k$ are taken from some set~$K$ and the values~$v$ from some set~$V$, and where stored keys must be distinct from one another\footnote{The variant which allows identical keys can be seen as a key-set (KS) store, where a given key corresponds to a set of elements. Our framework can be extended to KS-stores as well.} (values are allowed to be identical). The size of~$\cS$, i.e., the number of pairs in it, is denoted by~$s$. A KV-store must support the following routines:
\begin{itemize}
	\item $\texttt{Add}(k,v)$ should add~$(k,v)$ to~$\cS$ if there is no pair of the form~$(k,\cdot)$ in~$\cS$, and otherwise should return a ``failure'' symbol~$\bot$.
	\item $\texttt{Remove}(k)$ should remove~$(k,v)$ from~$\cS$ if exists, and otherwise should return~$\bot$.
	\item $\texttt{Return}(k)$ should return~$v$ if $(k,v)\in \cS$, and~$\bot$ otherwise.
\end{itemize}
We comment that one often requires \textit{bidirectional} KV-stores, i.e., which additionally support a~$\texttt{Return}(v)$ for a value~$v\in V$. Our framework can be seamlessly extended to bidirectional KV-stores as well, and yet we focus on KV-stores as stated above for better clarity.

To construct a hyperdimensional vector~$\bolds$ for a KV-store~$\cS$ with key set~$K$ and value set~$V$, we consider an~$\epsilon$-balanced code~$\cC$ (see Section~\ref{section:preliminariesCodes}) which has two subcodes~$\cK,\cV$ such that~$\cC=\cK\times \cV$. Additionally,  we require that $|K|\le|\cK|$ and~$|V|\le|\cV|$, i.e., the subcode~$\cK$ (resp.~$\cV$) has at least as many codewords as possible keys (resp. values). We let~$\kappa:K\to\cK$ be an encoding function which maps keys in~$K$ to codewords of~$\cK$, and let~$\nu:V\to\cV$ be an encoding function which maps values in~$V$ to codewords of~$\cV$. We initialize the HD-representation of~$\cS$ as~$\bolds=0$, and the above functions are implemented as follows.
\begin{itemize}
	\item $\texttt{Add}(k,v)$: If~$\texttt{Return}(k)=\bot$, perform~$\bolds=\bolds+\kappa(k)\oplus\nu(v)$, and otherwise return~$\bot$.
	\item $\texttt{Remove}(k)$: If $\texttt{Return}(k)=v$ perform~$\bolds=\bolds-\kappa(k)\oplus \nu(v)$, and otherwise return~$\bot$.
	\item $\texttt{Return}(k)$: There are two options with different complexities.
	\begin{enumerate}[start=1,label={(\Alph*)}]
		\item\label{item:ReturnMax} Let~$p=\max_{v\in V}[\bolds\oplus \kappa(k)]\nu(v)^\intercal$. If~$p>n(\epsilon+1/2)$, return the maximizing value (i.e., return~$\arg\max_{v\in V}[\bolds\oplus \kappa(k)]\nu(v)^\intercal$), and otherwise return~$\bot$.
		\item\label{item:ReturnSigma} Perform~$\Sigma$-recovery (Section~\ref{section:SigmaRecovery}) over~$\bolds\oplus\kappa(k)$, and let~$\boldr_1,\ldots,\boldr_s$ be the resulting vectors. Let~$H_\cV$ be the parity check matrix of~$\cV$. Return the unique~$\boldr\in\{\boldr_i\}_{i=1}^s$ such that~$H_{\cV}\boldr^\intercal=\1$ if exists, and otherwise return~$\bot$.
	\end{enumerate}
\end{itemize}

Since the correctness and complexity of $\texttt{Add}(k,v)$ and $\texttt{Remove}(k)$ follow immediately from that of $\texttt{Return}(k)$, we focus on the latter.

\begin{theorem}\label{theorem:returnCnC}
	Let~$s$ be an upper bound on the number of pairs in the KV-store~$\cS$.
	If~$\epsilon<\frac{1}{4s-2}$, then Option~\ref{item:ReturnMax} implements~$\texttt{Return}(k)$ correctly, and has complexity~$O(n|V|)$, and Option~\ref{item:ReturnSigma} implements~$\texttt{Return}(k)$ correctly (subject to the limitations of the underlying~$\Sigma$-recovery algorithm), and has complexity~$O(C+s(n-\log|V|)n)$, where the underlying~$\Sigma$-recovery algorithm runs in~$O(C)$.
\end{theorem}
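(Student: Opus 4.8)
The plan is to reduce both options to a single structural observation about the queried vector $\bolds\oplus\kappa(k)$. Writing the stored pairs as $(k_1,v_1),\dots,(k_{s'},v_{s'})$ with $s'\le s$ and pairwise distinct keys, we have $\bolds=\sum_{i=1}^{s'}\kappa(k_i)\oplus\nu(v_i)$, and since binding is point-wise multiplication by a $\pm1$ vector it distributes over the real-valued bundling; hence $\bolds\oplus\kappa(k)=\sum_{i=1}^{s'}\boldw_i$ with $\boldw_i:=\kappa(k_i)\oplus\kappa(k)\oplus\nu(v_i)$, a codeword of $\cC$ (because $\cK$ and $\cV$ are subspaces of $\cC$). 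The first thing I would record is a dichotomy via Lemma~\ref{lemma:notinV}: if $k=k_{i_0}$ for some (necessarily unique) $i_0$, then $\boldw_{i_0}=\nu(v_{i_0})\in\cV$ whereas $\boldw_i\notin\cV$ for $i\ne i_0$; and if $k\notin\{k_i\}$, then $\boldw_i\notin\cV$ for every $i$. Second, for every $i$ with $k_i\ne k$ and every $v'\in V$, the equality $\boldw_i=\nu(v')$ would force $\kappa(k_i)\oplus\kappa(k)=\nu(v')\oplus\nu(v_i)\in\cK\cap\cV=\{\1\}$, hence $k_i=k$ --- a contradiction --- so $\boldw_i$ and $\nu(v')$ are distinct codewords of $\cC$; since $\cC$ is $\epsilon$-balanced this yields $|\boldw_i\nu(v')^\intercal|\le2\epsilon n$ (cf.\ Proposition~\ref{proposition:incoherentBalanced}).

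For Option~\ref{item:ReturnMax} I would expand $[\bolds\oplus\kappa(k)]\nu(v')^\intercal=\sum_{i=1}^{s'}\boldw_i\nu(v')^\intercal$ and apply the dichotomy. If $(k,v)\in\cS$, say $k=k_1$, $v=v_1$: taking $v'=v$, the $i=1$ term is $\nu(v)\nu(v)^\intercal=n$ and every remaining term is $\ge-2\epsilon n$, so $p\ge n-2\epsilon n(s-1)$; taking $v'\ne v$, every term is $\le2\epsilon n$ (using injectivity of $\nu$ for the $i=1$ term), so $[\bolds\oplus\kappa(k)]\nu(v')^\intercal\le2\epsilon n s$. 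If $k\notin\{k_i\}$, then likewise every term is $\le2\epsilon n$, so $p\le2\epsilon n s$. Now $\epsilon<\tfrac{1}{4s-2}$ is precisely $\epsilon(2s-1)<\tfrac{1}{2}$, which gives $2\epsilon n s<n(\tfrac{1}{2}+\epsilon)<n-2\epsilon n(s-1)$; hence when the key is present $p>n(\tfrac{1}{2}+\epsilon)$ and is attained uniquely at $v'=v$, and when the key is absent $p\le2\epsilon n s<n(\tfrac{1}{2}+\epsilon)$ --- exactly the two cases the routine separates via its threshold. Correctness follows, and the running time is dominated by the $|V|$ length-$n$ inner products, i.e.\ $O(n|V|)$.

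For Option~\ref{item:ReturnSigma} I would first note that the $\boldw_i$ are pairwise distinct ($\boldw_i=\boldw_j$ forces $\kappa(k_i)\oplus\kappa(k_j)\in\cK\cap\cV=\{\1\}$, so $k_i=k_j$), so $\bolds\oplus\kappa(k)$ is a bundle of distinct codewords of $\cC$ and a successful $\Sigma$-recovery returns $\boldr_1,\dots,\boldr_{s'}$ equal to the $\boldw_i$ in some order. By the dichotomy, the test $H_\cV\boldr_i^\intercal=\1$ (i.e.\ $\boldr_i\in\cV$) passes for exactly one index if $(k,\cdot)\in\cS$ --- and that $\boldr_i$ equals $\nu(v)$ --- and for no index if $k\notin\{k_i\}$, in which case $\bot$ is returned; this is the asserted correctness, conditioned on the $\Sigma$-recovery step. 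For the cost: $\Sigma$-recovery is $O(C)$; each of the $\le s$ syndrome evaluations $H_\cV\boldr_i^\intercal$ costs $O((n-\dim_{\bF_2}\cV)n)$, which is $O((n-\log|V|)n)$ since $\dim_{\bF_2}\cV=\log|\cV|\ge\log|V|$; so the total is $O(C+s(n-\log|V|)n)$. (If the value $v$ itself rather than $\boldr=\nu(v)$ is wanted, one further noiseless $\bF_2$-linear solve recovers it.)

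I expect the main obstacle to be the bookkeeping in Option~\ref{item:ReturnMax}: ensuring that \emph{all} the cross terms $\boldw_i$ really are codewords genuinely distinct from every tested $\nu(v')$ --- which is exactly where the direct-sum hypothesis $\cK\cap\cV=\{\1\}$ and the distinctness of stored keys get used --- and then checking that the single scalar inequality $\epsilon<\tfrac{1}{4s-2}$ does triple duty: it makes $\nu(v)$ the unique argmax, pushes $p$ above the threshold $n(\tfrac{1}{2}+\epsilon)$ when the key is present, and keeps $p$ below it when the key is absent. Option~\ref{item:ReturnSigma} should be comparatively routine once correctness of the underlying $\Sigma$-recovery is granted, since it reduces directly to Lemma~\ref{lemma:notinV}.
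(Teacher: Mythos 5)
Your proposal is correct and follows essentially the same route as the paper: the same decomposition of $\bolds\oplus\kappa(k)$ into codewords of $\cC$, the same use of Lemma~\ref{lemma:notinV} to separate the unique in-$\cV$ term from the rest, the same $2\epsilon$-incoherence bounds and threshold arithmetic around $n(\epsilon+1/2)$ for Option~\ref{item:ReturnMax}, and the same parity-check argument for Option~\ref{item:ReturnSigma}. If anything, your version is slightly more careful than the paper's in ranging the argmax over all of $V$ (invoking injectivity of $\nu$) and in noting the pairwise distinctness of the bundled codewords before appealing to $\Sigma$-recovery.
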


\begin{proof}
	Denote~$\cS=\{(k_i,v_i)\}_{i=1}^s$, which by the operations $\texttt{Add}(k,v)$ and $\texttt{Remove}(k)$ implies that~$\bolds=\sum_{i=1}^s\boldk_i\oplus\boldv_i$, where~$\boldk_i\triangleq\kappa(k_i)$ is the encoding of key~$k_i$ in~$\cK$ and~$\boldv_i\triangleq \nu(v_i)$ is the encoding of value~$v_i$ in~$\cV$.   		
	
	\noindent{Option~\ref{item:ReturnMax}:}
	The complexity is~$O(n|V|)$, since~$2n$ multiplications and~$n-1$ additions (over~$\bR$) are needed for every~$v\in V$ in the~$\max$ operation. To show correctness, observe that 
	\begin{align*}
		\bolds\oplus\kappa(k)=\sum_{i\in[s]}\boldk_i\oplus\kappa(k)\oplus \boldv_i.
	\end{align*}
	and split to two cases.
	
	\underline{Case 1:} There exists~$j\in[s]$ such that~$k=k_j$. We have that
	\begin{align}\label{equation:ReturnMax}
		\bolds\oplus\kappa(k)&=\bolds\oplus\boldk_j=\boldv_j+\sum_{i\in[s]\setminus\{j\}}\boldk_i\oplus \boldk_j\oplus\boldv_i\nonumber\\
		&\triangleq\boldv_j+\sum_{i\in[s]\setminus\{j\}} \boldk'_i\oplus\boldv_i,
	\end{align}
	for some vectors~$\boldk'_i\triangleq\boldk_i\oplus\boldk_j,i\in[s]\setminus\{j\}$. We note that the vectors~$\{ \boldk_i' \}_{[s]\setminus\{j\}}$ all lie in~$\cK\setminus \{\1\}$ since~$\boldk_1,\ldots,\boldk_s$ are distinct, and since~$\cK$ is closed under~$\oplus$. Now, recall that since~$\cC=\cK\times \cV$, it follows that~$\boldk\oplus\boldv\in\cC$ for every~$\boldk\in \cK$ and every~$\boldv\in \cV$, namely, that binding an encoded key with an encoded value always produces an element of the underlying~$\epsilon$-balanced code~$\cC$. Therefore, all elements in the summation in~\eqref{equation:ReturnMax} are codewords of~$\cC$. Moreover, it follows from Lemma~\ref{lemma:notinV} that since~$\boldk_i'\ne\1$ for all~$i\in[s]\setminus\{j\}$, we have that~$\boldk'_i\oplus\boldv_i\notin\cV$, and in particular~$\boldk'_i\oplus\boldv_i\ne \boldv_j$. Therefore, since~$\cC$ is~$\epsilon$-balanced, it follows from Proposition~\ref{proposition:incoherentBalanced} that it is~$2\epsilon$-incoherent (i.e.,~$|\boldc'\boldc^\intercal|\le2\epsilon n$ for every distinct~$\boldc',\boldc\in\cC$), and therefore~\eqref{equation:ReturnMax} implies that
	\begin{align}\label{equation:isj}
		[\bolds\oplus\kappa(k)]\boldv_j^\intercal\ge n - (s-1)\cdot 2\epsilon n,
	\end{align}
	whereas for every~$\ell\in[s]\setminus\{j\}$ we have
	\begin{align}\label{equation:notj}
		[\bolds\oplus\kappa(k)]\boldv_\ell^\intercal&=\boldv_j\boldv_\ell^\intercal+\sum_{i\in[s]\setminus\{j\}}(\boldk'_i\oplus\boldv_i)\boldv_\ell^\intercal\nonumber\\ &\le 2\epsilon n +\sum_{i\in[s]\setminus\{j\}}(\boldk'_i\oplus\boldv_i)\boldv_\ell^\intercal\quad \text{(since $\ell\ne j$)}\nonumber\\
		&\le 2\epsilon n+(s-1)\cdot 2\epsilon n\quad\text{(since~$\boldk'_i\oplus\boldv_i\ne \boldv_\ell$ for all~$i$, while both are in~$\cC$)}\nonumber\\
		&=2s\epsilon n.
	\end{align}
	By observing that~$n-(s-1)\cdot 2\epsilon n > 2s\epsilon n$ if and only if~$\epsilon<\frac{1}{4s-2}$ (which is given), and that~$(n-(s-1)\cdot2\epsilon n+2s\epsilon n)/2=n(\epsilon+1/2)$, Eqs.~\eqref{equation:isj} and~\eqref{equation:notj} imply that
	\begin{align*}
		[\bolds\oplus\kappa(k)]\boldv_j^\intercal>n(\epsilon+1/2)>[\bolds\oplus\kappa(k)]\boldv_\ell^\intercal
	\end{align*}
	for all~$\ell\in[s]\setminus\{j\}$. Hence, 
	the maximum~$p$ is larger than~$n(\epsilon+1/2)$, and its corresponding maximizer is~$v_j$, which is indeed the value corresponding to the queried key~$k_j$.
	
	\underline{Case 2:} $k$ is neither of the stored keys~$k_1,\ldots,k_s$. Similar to~\eqref{equation:ReturnMax} we have that
	\begin{align}\label{equation:ReturnMax2}
		\bolds\oplus\kappa(k)=\bolds\oplus\boldk=\sum_{i\in[s]}\boldk_i\oplus \boldk\oplus\boldv_i,
	\end{align}
	where~$\boldk=\kappa(k)$. The fact that~$k\notin\{ k_i \}_{i=1}^s$ implies that~$\boldk_i\oplus\boldk\in\cK\setminus\{\1\}$ for all~$i\in[s]$, and hence~$\boldk_i\oplus\boldk\oplus\boldv_i\notin\cV$ (and in particular~$\boldk_i\oplus\boldk\oplus\boldv_i\ne \boldv_\ell$ for every~$\ell\in[s]$) for all~$i\in[s]$ by Lemma~\ref{lemma:notinV}. Therefore, for every~$\ell\in[s]$ we have
	\begin{align*}
		[\bolds\oplus\kappa(k)]\boldv_\ell^\intercal=[\bolds\oplus\boldk]\boldv_\ell^\intercal\overset{\eqref{equation:ReturnMax2}}{=}\sum_{i\in[s]}(\boldk_i\oplus\boldk\oplus\boldv_i)\boldv_\ell^\intercal\le s\cdot2\epsilon n,
	\end{align*}
	where the last inequality follows from the fact that~$\boldv_\ell$ and~$\boldk_i\oplus\boldk\oplus\boldv_i$ are distinct codewords in a~$2\epsilon$-incoherent code, for all~$i\in[s]$, and hence their inner product is at most~$2\epsilon n$. Now, observing that~$\epsilon<\frac{1}{4s-2}$ (which is given) implies that~$s\cdot2\epsilon n<n(\epsilon+1/2)$, hence the maximizer~$p$ is at most~$n(\epsilon+1/2)$, and we correctly return~$\bot$.
	
	\noindent Option~\ref{item:ReturnSigma}: Let~$O(C)$ be the complexity of the underlying~$\Sigma$-recovery algorithm, and recall that the parity check matrix~$H_\cV$ of~$\cV$ is an~$(n-\dim_{\bF_2}(\cV))\times n$ matrix over~$\bF_2$. On top of performing the~$\Sigma$-recovery algorithm we perform~$s$ parity-checks using~$H_\cV$, and hence the overall complexity is~$O(C+s(n-\dim_{\bF_2}\cV)n)$. Moreover, since we choose a code~$\cV$ which is large enough to contain all keys in~$V$, we have that~$|\cV|\ge|V|$, and since~$\dim_{\bF_2}\cV=\log|\cV|$ it follows that the complexity is~$O(C+s(n-\log|V|)n)$.
	
	To show correctness, assuming that the underlying~$\Sigma$-recovery algorithm recovers the factors~$\boldr_1,\ldots,\boldr_s$ of~$\bolds\oplus\kappa(k)$ correctly, it suffices to show that: $(a)$ if~$k=k_j$ for some~$j\in[s]$ then~$H_\cV\odot\boldr^\intercal=\1$ for a unique~$\boldr\in\{ \boldr_i \}_{i=1}^s$, and that~$\boldr=\boldv_j$; and~$(b)$ if~$k\ne k_j$ for all~$j\in[s]$ we have that~$H_\cV\odot\boldr_\ell^\intercal\ne \1$ for all~$\ell\in[s]$. 
	
	To show~$(a)$, assume that~$k=k_j$ for some~$j\in[s]$. Then, by~\eqref{equation:ReturnMax} we have that
	\begin{align*}
		\bolds\oplus\kappa(k)=\boldv_j+\sum_{i\in[s]\setminus\{j\}} \boldk'_i\oplus\boldv_i,
	\end{align*}
	which implies that~$\{\boldr_i\}_{i=1}^s=\{\boldv_j\}\cup\{\boldk_i'\oplus\boldv_i\}_{i\in[s]\setminus\{j\}}$. Indeed, since~$\boldk_i'=\boldk_i\oplus\boldk_j\ne \1$ for all~$i\in[s]\setminus\{j\}$, we have that~$\boldk_i'\oplus\boldv_i\notin\cV$ by Lemma~\ref{lemma:notinV}, which implies that~$H_\cV\odot (\boldk_i'\oplus\boldv_i)^\intercal\ne \1$ for all~$i\in[s]\setminus\{j\}$; further, since~$\boldv_j\in\cV$ we have that~$H_\cV\odot\boldv_j^\intercal=\1$ as required. Showing~$(b)$ is similar, since if~$k\ne k_j$ for all~$j\in[s]$ it follows from~\eqref{equation:ReturnMax2} that
	\begin{align*}
		\bolds\oplus\kappa(k)=\bolds\oplus\boldk=\sum_{i\in[s]}\boldk_i\oplus \boldk\oplus\boldv_i,
	\end{align*}
	where~$\boldk=\kappa(k)$, which implies that~$\{\boldr_i\}_{i=1}^s=\{ \boldk_i\oplus\bold \boldk\oplus \boldv_i\}_{i=1}^s$. Indeed, for all~$i\in[s]$, since~$\boldk_i\ne\boldk$ it follows that~$\boldk_i\oplus\boldk\oplus\boldv_i\notin\cV$ by Lemma~\ref{lemma:notinV}, and hence~$H_\cV\odot \boldr_i^\intercal\ne \1$, which concludes the proof.
\end{proof}

It is evident from Theorem~\ref{theorem:returnCnC} that the determining factor in choosing between Option~\ref{item:ReturnMax} and Option~\ref{item:ReturnSigma} is the size of the key space~$|V|$ relative to the complexity of executing a~$\Sigma$-recovery algorithm alongside~$s$ parity checks. 

\subsection{Compositional keys and values}\label{section:compositional}
In many applications in hyperdimensional computing one requires a set of keys or a set of values which has a compositional structure. For instance, in a KV-store which represents a search tree, a key~$\boldk$ represents a path in the tree, and a value~$\boldv$ represents the content of a leaf at the end of that path (see Section~\ref{section:examples} which follows for a detailed example). A path in this case is compositional in the sense that it describes which of the~$q$ siblings to follow (in a $q$-ary tree) in order to reach said leaf from the root of the tree. As another example, consider the case of visual scene analysis (also discussed in detail in  Section~\ref{section:examples}), i.e., where each ``scene'' is a ``key store'' (i.e., a KV-store with trivial value set~$V$ which contains only one element): each scene is a bundle of multiple \textit{objects} in the scene, and each object is a binding of multiple \textit{attributes} (such as type, color, position, etc.). Each such key (or object) is therefore compositional in the sense that it is obtained by binding multiple HD vectors. If an HD representation of a scene is to be factorized, one would first apply~$\Sigma$-recovery to obtain a list of all individual objects in the scene, and further apply~$\oplus$-recovery on each object to obtain its color, position, etc.

Due to this duality between keys and values in the context of~$\oplus$-recovery, in the remainder of this section we refer to either of them as \textit{labels}. Compositional labels can be supported using a straightforward combination of KV-stores (see previous section) with the~$\oplus$-recovery algorithm described in Section~\ref{section:oplusRecovery} to follow. Specifically, suppose a label set~$L$ has a compositional structure, i.e., it is contained in the product of several smaller sets~$L\subseteq L_1\times \ldots\times L_r$ for some~$r$. To impose this compositional structure on the frameworks described above, identify~$L$ with a subcode~$\cL$ of~$\cC$ (i.e.,~$\cL$ is either of the subcodes~$\cK$ or~$\cV$ mentioned in Section~\ref{section:KV}), and further decompose~$\cL$ to~$r$ subcodes~$\cL=\cL_1\times\ldots\times \cL_r$ (clearly, each~$\cL_i$ must be at least as large as~$L_i$), where each subcode~$L_i$ has a respective encoding function~$\lambda_i:L_i\to\cL_i$. 

For a given compositional label~$\ell=(\ell_1,\ldots,\ell_r)\in L$, encode it using a composition of the individual encoding functions:
\begin{align*}
	\lambda(\ell)=\lambda_1(\ell_1)\oplus\ldots\oplus\lambda_r(\ell_r).
\end{align*}
The operations \texttt{Return}, \texttt{Add}, and \texttt{Remove} work exactly as described earlier, by replacing the proper encoding function (i.e., either~$\kappa$ for keys or~$\nu$ for values) with~$\lambda$. If at any point a decomposition of~$\lambda(\ell)$ is required, apply the~$\oplus$-recovery algorithm of Section~\ref{section:oplusRecovery}, which is efficient, succeeds in all cases, and in fact, does not even rely on the incoherence property of~$\cC$.

\subsection{Examples}\label{section:examples}
In this section it is shown that the KV-store implementation above specifies, with a proper choice of an incoherent code~$\cC$ and its subcodes~$\cK$ and~$\cV$, to many use cases of HD data structures that were recently discussed in the literature. 

\subsubsection{Finite sets} \label{section:finitesets} Arguably the simplest data structure, a finite set must support addition and removal of items, as well as testing if a given item belongs to the set. To store a set of size at most~$s$ from a universe~$K$, follow the construction in Section~\ref{section:KV} as follows. Choose~$\epsilon<\frac{1}{4s-2}$ and an integer~$k$ such that each item in~$K$ can be represented using~$k$ bits---these bits could either be the item's index, a result of a compression algorithm, or be chosen at random, see Appendix~\ref{section:representation} for details. Then, choose~$n$ such that~$\epsilon\approx \sqrt{\frac{k}{n}}$, choose an~$[n,k]_2$ random linear code~$\cC$, and fix the (trivial) subcodes~$\cK=\cC$ and~$\cV=\{\1\}$ (which clearly satisfy~$\cC=\cK\times \cV$). 

For an item~$i\in K$, let~$\boldi$ be its representation using~$k$ bits, and define the encoding function~$\kappa:K\to\cK$ be~$\kappa(i)=\boldi G$, where~$G$ is the generator matrix of~$\cC$. Also, let~$\nu$ be the function which outputs~$\1$ for any input. 

To perform operations over the resulting data structure, begin by initializing~$\bolds=0$. To test if a given item~$i\in K$ is in~$\bolds$ apply~$\texttt{Return}(i)$; if the output is~$\1$ return~$\texttt{Yes}$, and if the output is~$\bot$ return~$\texttt{No}$. Given an item~$i\in K$ to store, apply the routine~$\texttt{Add}(i,1)$, which, if~$\texttt{Return}(i)=\texttt{No}$, adds~$\kappa(i)\oplus\nu(1)=\boldi G\oplus\1=\boldi G$ to~$\bolds$. Similarly, to remove an item~$i\in K$ first apply~$\texttt{Return}(i)$, and if the answer is~$\texttt{Yes}$, subtract~$\kappa(i)\oplus \nu(1)=\boldi G$ from~$\bolds$. Theorem~\ref{theorem:returnCnC} implies that these procedures implement a set correctly with the prescribed complexities.

\begin{remark}
	If one wishes not to choose the entire code~$\cC$ a priori, or if a representation of the elements of~$K$ as bits is not available, one can resort to random choices of codewords on the fly, with very light bookkeeping to maintain the linear structure of~$\cC$. The full details are given in the Appendix~\ref{section:onthefly}.
\end{remark}

\begin{remark}\label{remark:sameAsNonlinear}
	A set data structure built using a linear code is identical to one built using a not-necessarily-linear one in terms of their supposed incoherence, and hence their information content (i.e., the maximum set size~$s$ which guarantees successful remove, add, and membership operations). Therefore, any further property whose proof relies exclusively on the coherence of the underlying code is automatically inherited by our linear-code variant. For example, our methods of using linear codes to represent sets automatically inherit the size estimation, intersection computation, and noise resilience mentioned in~\cite[Thms. 8, 9, 10]{thomas2021theoretical}.
\end{remark}

\subsubsection{Vectors (sequences)} Arguably the second simplest data structure is one which stores \textit{ordered} elements. Specifically, the data structure should support the storage of at most~$s$ ordered elements from a universe~$V$, as well as inquiring about the content of entry~$i\in[s]$, replacing an item an entry~$i\in[s]$ by another, etc.

Traditionally (e.g., \cite{thomas2021theoretical,clarkson2023capacity,frady2018theory}), sequences are stored with the aid of an additional basic operation (on top of binding and bundling), the \textit{cyclic permutation}, as follows. To store a sequence of~$s$ elements~$(e_1,\ldots,e_s)$, where~$e_i\in V$ for all~$i$, each codeword for~$e_i$ is rotated cyclically~$i$ times (say, to the right), and the resulting vectors are bundled to form a vector~$\bolds$. Then, for instance, to return the element at position~$i$, the vector~$\bolds$ is rotated~$i$ times to left, and an ordinary recall algorithm is performed (e.g., by finding the maximizer of the inner product with~$\bolds$ over the entire codebook for~$V$). The reasoning behind this approach is that a random codeword is mutually independent (entry-wise) with its cyclic shifts. Therefore, after rotating~$i$ times to the left, only the correct codeword (i.e., which corresponds to the~$i$'th value in~$\bolds$) would have large inner-product with~$\bolds$, and the remaining rotated codewords would have small-inner products which would not interfere.

When using linear codes, however, similar intuitions can formalized and guaranteed \textit{without} the need for the additional permutation operation; this guarantees simpler implementation of sequential data structures using only binding and bundling. Specifically, choose~$\epsilon<\frac{1}{4s-2}$ and an integer~$k$ such that~$k=k'+\ceil{\log(s)}$, where every element in~$V$ has a binary representation using~$k'$ bits (see Appendix~\ref{section:representation}). Then, choose~$n$ such that~$\epsilon\approx \sqrt{\frac{k}{n}}$, choose an~$[n,k]_2$ random linear code~$\cC$, and choose subcodes~$\cK$ and~$\cV$ such that~$\dim_{\bF_2}\cK=\ceil{\log(s)}$, $\dim_{\bF_2}\cV=k'$, and~$\cC=\cK\times\cV$. The subcode~$\cK$ is used for indices, and we define its respective encoding accordingly---for an integer~$i\in[s]$ let~$\boldi$ be its binary (i.e.,~$\pm1$) representation using~$\ceil{\log(s)}$ bits, and let~$\kappa(i)=\boldi G_\cK$, where~$G_\cK$ is a generator matrix for~$\cK$. For~$\cV$ we follow similar encoding to the one we used for sets---for an element~$j\in V$ we let~$\boldj$ be its representation using~$k'$ bits, and let~$\nu(j)=\boldj G_\cV$, where~$G_\cV$ is a generator matrix for~$\cV$. 

Intuitively, in our method entry~$i$ of the resulting vector is encoded using codewords from the coset $\kappa(i)\oplus\cV$ of~$\cV$, which is contained in~$\cC$. Since different cosets are disjoint (see Section~\ref{section:preliminariesCodes}) and since they are all contained in~$\cC$, the inner product between codewords from different cosets will always be small. Formally, to add an element~$e\in V$ at position~$i$ apply~$\texttt{Add}(i,e)$; to remove the element from position~$i$ apply~$\texttt{Remove}(i)$; and to inquire about the content of entry~$i$ apply~$\texttt{Return}(i)$. Theorem~\ref{theorem:returnCnC} guarantees that these operations succeed.

\subsubsection{Search trees} To illustrate another strength of our proposed methods, we provide an example which uses the compositional capabilities of HDC with linear codes (Section~\ref{section:compositional}). We follow the use-case described in~\cite[Sec.~4.1]{frady2020resonator}, which describes an HD representation of a search tree. The work of~\cite{frady2020resonator} requires bundling, binding, and permuting, and a combination of~$\Sigma$-recovery and~$\oplus$-recovery for querying the resulting vector, which are implemented heuristically using resonator networks. In contrast, we describe herein a technique which only requires bundling and binding, and queries the resulting vector using~$\Sigma$-recovery and~$\oplus$-recovery, which can be efficiently implemented for linear codes as will be shown in Section~\ref{section:recovery}.

A search tree is seen as a collection of leaves. Each leaf is a tuple $(i_0,i_1,\ldots,i_\ell)$ with~$i_0$ the value in the leaf, and the path leading to it from the root is given by~$i_1,\ldots,i_\ell$, where~$i_j\in\{\texttt{left},\texttt{right}\}$ for~$j\in[\ell]$. In~\cite{frady2020resonator}, an HD vector for a search tree is constructed by bundling key-value pairs. Each pair is the binding of the leaf value~$i_0$ with a vector~$(i_1,\ldots,i_\ell)$ representing the path to it (the vector is encoded using cyclic permutations as described earlier). Our framework is different from~\cite{frady2020resonator} in a few key aspects. First, we support any number of offsprings for a given node in the tree. Second, we use the subcode structure to not only obviate the need to use permutations, but to eliminate the need to store ordered information altogether. Intuitively, to describe a path in a tree of depth~$\ell$ we use~$\ell$ subcodes, the~$i$'th of which describes which offspring to follow in step~$i$ of the path. Since subcodes are inherently distinct from one another, no further ordering information is required.

Formally, for parameters~$d,\ell,s,r$ with~$s\le (2^d-1)^\ell$, we wish to encode an $(2^d-1)$-ary search tree (i.e., at most~$2^d-1$ offsprings at each node) of depth at most~$\ell$ and at most~$s$ leaves, where each leaf contains a value which can be represented using~$r$ bits (see Appendix~\ref{section:representation}). Choose~$\epsilon<\frac{1}{4s-2}$ and an~$[n,k]_2$ code~$\cC$ such that~$\epsilon\approx \sqrt{\frac{k}{n}}$, where~$k=\ell d+r$, and choose subcodes~$\cK$ and~$\cV$ of~$\cC$ such that~$\dim_{\bF_2}\cK= r$, $\dim_{\bF_2}\cV= \ell d$, and~$\cC=\cK\times \cV$. 

Further, choose subcodes~$\cV_1,\ldots,\cV_\ell$ of~$\cV$ such that~$\dim_{\bF_2} \cV_i= d$ for all~$i\in[\ell]$, and~$\cV=\cV_1\times\ldots\times\cV_\ell$. As in previous examples, use an encoding function~$\kappa(i)=\boldi G_\cK$ to encode a value~$i\in K$ (i.e., labels in leaves), where~$\boldi$ is a representation of~$i$ using~$r$ bits and where~$G_\cK$ is a generator matrix for~$\cK$. In addition, for each~$j\in[d]$ and~$i\in[\ell]$ use~$\nu_i(j)=\boldj G_{\cV_i}$ to encode the fact that in depth~$i$ the path follows sibling~$j$ (or stops if~$j=0$), where~$\boldj$ is the~$d$-bit binary representation of~$j\in[\ell]$. As in Section~\ref{section:compositional} we use the~$\nu_i$'s to form a joint encoding function for~$\cV$---a path~$j=(j_1,\ldots,j_\ell)$ is encoded as~$\nu(j)=\nu_1(j_1)\oplus \ldots\oplus \nu_\ell(j_\ell)$.

The search tree data structure should support insertion of a given label at a given path, removal of a label if exists, and a \textit{path-query}\footnote{In~\cite{frady2020resonator} an additional \textit{leaf-query} in required, which returns a leaf-label given a path; this can be implemented using \textit{bi-directional} KV-stores, a slight extension of the KV-stores described in Section~\ref{section:KV}, and the details are omitted for brevity.}. A path-query receives a leaf-label~$i\in V$, returns the path leading to it, if exists, and otherwise returns~$\bot$. 

To support insertion, removal, and path-query using the above linear-code framework, use the following routines. To insert a leaf labeled~$i$ at path~$j=(j_1,\ldots,j_\ell)$ apply~$\texttt{Add}(i,j)$. To remove a leaf whose label is~$i$, apply~$\texttt{Remove}(i)$. To find what is the path, if exists, to a given label~$i\in V$ apply~$\texttt{Return}(i)$---if the output is~$\bot$ return~$\bot$, and otherwise, if the output is~$j$, use~$\oplus$-recovery over~$\nu(j)$ to decompose~$j$ to its constituent elements~$(j_1,\ldots,j_\ell)$. We emphasize again that our~$\oplus$-recovery algorithm is exact in this case. It returns the correct decomposition of~$j$ to~$(j_1,\ldots,j_\ell)$ in all cases, has linear running time, and only uses algebraic operations. This stands in stark contrast to the heuristic methods of resonator networks.

\begin{remark}
	The above routine for path-query is a special case of a more general approach. In the above we chose to represent the compositional structure of paths in a tree using the subcode structure of the (sub)code~$\cV$; we represent the path~$\nu(j)$ using the unique decomposition~$\nu(j)=\nu_1(j_1)\oplus\ldots\oplus\nu_\ell(j_\ell)$. However,~$\nu(j)$ is inherently connected to its \textit{decoding} (see Section~\ref{section:preliminariesCodes}), i.e., to the process of identifying the unique bit string~$\boldm$ of length $\dim_{\bF_2}\cV$ such that~$\nu(j)=\boldm G_\cV$. In turn,~$\boldm$ can represent a path~$j=(j_1,\ldots,j_\ell)$ in \emph{any} other arbitrary way (e.g., by enumeration). Therefore, one is free to use decoding, followed by any other interpretation of~$\boldm$ as a path.
\end{remark}

\subsubsection{Visual scene analysis} Another use-case which demonstrate the compositional capabilities of linear codes, is a data structure which describes a scene that contains superimposed objects, and each object contains multiple attributes. For example, imagine a picture in which multiple digits are overlayed on top of each other in different positions and different colors~\cite[Fig.~3]{frady2020resonator}. One would like a construct a single HD vector which describes such a scene, in a way which can be decoded efficiently, i.e., a given compositional HD vector can be decomposed to its constituent objects, and each object can be decomposed to its constituent attributes. A natural solution is using four codes to describe the color, the digit, the latitude and the longitude of the object. Then, a scene is a bundling of object vectors, each object vector is a binding of a codeword from each of these codes. We comment that this technique is useful in adding reasoning capabilities to machine learning models. By training these models to identify compositional structures, and then decoding those structures to their constituent elements, one improves the interpretation capabilities of these models exponentially.

Visual scenes are also a special case of our proposed KV-stores. Specifically, suppose~$s$ objects are to be superimposed, each object has~$d$ attributes (type, color, etc.), and each attribute can be represented using~$f$ bits. Choose~$\epsilon<\frac{1}{4s-2}$ and respective~$n$ and~$k$ such that~$\epsilon\approx \sqrt{\frac{k}{n}}$, where~$k=df$. Choose an~$[n,k]_2$ random linear code~$\cC$, and choose subcodes~$\cK=\cC$, $\cV=\1$, and~$\cC=\cK\times\cV$. Further, choose subcodes~$\cK_1,\ldots,\cK_d$ such that~$\dim_{\bF_2}\cK_i=f$ and~$\cK=\cK_1\times\ldots\times\cK_d$. As in previous examples, the subcodes~$\cK_i$ are endowed with encoding functions~$\kappa_i(j)=\boldj G_{\cK_i}$, where~$G_{\cK_i}$ is the generator matrix for~$\cK_i$, and~$\boldj$ is the representation of attribute~$j$ using~$f$ bits. These functions form the encoding function for~$\cK$ in its entirety, where~$i=(i_1,\ldots,i_d)$ is encoded as~$\kappa(i)=\kappa_1(i_1)\oplus\ldots\oplus\kappa_d(i_d)$, and finally, let~$\nu$ be the constant function which outputs~$\1$ for every input. 

The routines for implementing a visual scene are as follows. To add an element with attributes~$a=(a_1,\ldots,a_d)$ to the scene, perform~$\texttt{Add}(a,1)$, and to remove an element with such attributes perform~$\texttt{Remove}(a)$. To analyze a scene, i.e., to decompose it to its constituent elements, and then further decompose each element to its attributes, perform~$\Sigma$-recovery, followed by~$\oplus$-recovery for each object. 

We also comment that for successful application of this technique in machine learning (see above), it is essential to be able to handle noise. 
That is, a machine learning model which was trained to output compositional vectors is highly likely to make certain errors (e.g., bit flips, as a result of inaccurate inputs to the activation functions in the last layer of the model). To remedy such cases, one would have to apply an additional decoding algorithm to remove that noise prior to feeding it to our~$\Sigma$-recovery and~$\oplus$-recovery algorithms. 

\section{Recovery algorithms}\label{section:recovery}

\subsection{$\oplus$-recovery}\label{section:oplusRecovery}
The~$\oplus$-recovery problem is formally defined as follows~\cite[Sec.~2]{kent2020resonator}.
\begin{itemize}
	\item \textbf{Input}: $\boldc\in\{\pm1\}^n$, and codes~$\cC_1,\ldots,\cC_F$ with~$\cC_i\subseteq\{\pm1\}^n$ for all~$i\in[F]$.
	\item \textbf{Goal}: Find~$\boldc_1,\ldots,\boldc_F$ with~$\boldc_i\in\cC_i$ for all~$i$, such that~$\boldc=\bigoplus_{i=1}^F\boldc_i$.
\end{itemize}

In general, this is a hard problem, although no hardness results (e.g., NP-completeness) are known to the author. In~\cite{kent2020resonator,frady2020resonator} a heuristic solution using resonator networks is proposed. Yet, as we show shortly, if the~$\cC_i$'s are linear codes, then the problem can be solved via straightforward Boolean algebra.

\begin{remark}\label{remark:bases}
	Clearly, the complexity of solving~$\oplus$-recovery depends on the representation of the~$\cC_i$'s. In general, when the~$\cC_i$'s are completely random codes they have no structure which can be used for compression, hence providing~$\cC_i$ as input requires~$|\cC_i|\cdot n$ many bits, and the total input size is~$(\sum_{i=1}^F|\cC_i|+1)n$. Also, note that a simple exhaustive search would require about~$\prod_{i=1}^F|\cC_i|$ many computations, which is infeasible in most cases. In contrast, if the codes~$\cC_i$ are linear, only their generator matrices are required to represent them, which reduces the input size to~$(\sum_{i=1}^F\dim_{\bF_2}(\cC_i)+1)\cdot n=(\sum_{i=1}^F\log|\cC_i|+1)\cdot n$ bits.
\end{remark}

\begin{theorem}
	If~$\cC_1,\ldots,\cC_F$ are linear codes then the~$\oplus$-recovery problem is efficiently solvable.
\end{theorem}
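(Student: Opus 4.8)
The plan is to reduce $\oplus$-recovery for linear codes to solving a single system of linear equations over~$\bF_2$, which is polynomial-time solvable by Gaussian elimination. Concretely, for each~$i\in[F]$ fix a generator matrix~$G_i\in\bF_2^{k_i\times n}$ of~$\cC_i$, where~$k_i=\dim_{\bF_2}(\cC_i)$; as noted in Remark~\ref{remark:bases}, this is exactly the natural input representation of a linear code. By definition of a generator matrix, a vector~$\boldc_i$ lies in~$\cC_i$ if and only if~$\boldc_i=\boldx_i\odot G_i$ for some~$\boldx_i\in\bF_2^{k_i}$. Hence a tuple~$(\boldc_1,\ldots,\boldc_F)$ with~$\boldc_i\in\cC_i$ and~$\bigoplus_{i=1}^F\boldc_i=\boldc$ exists if and only if there is a vector~$\boldx=(\boldx_1\mid\cdots\mid\boldx_F)\in\bF_2^{K}$, $K:=k_1+\cdots+k_F$, satisfying~$\boldx\odot G=\boldc$, where~$G\in\bF_2^{K\times n}$ is obtained by stacking~$G_1,\ldots,G_F$ vertically.

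The algorithm is then immediate: form~$G$, solve the linear system~$\boldx\odot G=\boldc$ over~$\bF_2$; if it has no solution, report that~$\boldc$ admits no valid decomposition (the instance is infeasible); otherwise take any solution~$\boldx$, split it as~$\boldx=(\boldx_1\mid\cdots\mid\boldx_F)$, and output~$\boldc_i=\boldx_i\odot G_i$. Correctness follows directly from the equivalence above. For the running time, Gaussian elimination over~$\bF_2$ on a system with~$n$ equations and~$K=\sum_i\log|\cC_i|$ unknowns runs in~$O(nK\min(n,K))$ bit operations, which is polynomial in the input length~$(K+1)n$ of Remark~\ref{remark:bases}, and in particular exponentially faster than the~$\prod_i|\cC_i|$ cost of exhaustive search.

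Two observations complete the picture. First, when the~$\cC_i$ arise as subcodes of a common ambient code with~$\cC=\cC_1\times\cdots\times\cC_F$ a direct sum---the regime of the KV-store and search-tree constructions of Section~\ref{section:KV}---the rows of~$G$ are linearly independent over~$\bF_2$, so the solution~$\boldx$, and therefore the decomposition~$\boldc=\bigoplus_i\boldc_i$, is unique; this is what makes ``recovery'' well defined in the applications. Second, the argument invokes no incoherence or minimum-distance property of the~$\cC_i$ whatsoever, matching the claim made earlier that this procedure succeeds for all linear codes.

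I do not anticipate a genuine mathematical obstacle here: the content is the one-line observation that a~$\bF_2$-sum of linear codes is again the image of a (stacked) linear map, so that recovery is just linear-system solving. The only points needing care are bookkeeping the matrix orientations under the~$(-1)^{(\cdot)}$ identification of~$\bF_2$ with~$\{\pm1\}$, handling the infeasible case cleanly, and stating the complexity bound precisely rather than merely asserting ``efficient.''
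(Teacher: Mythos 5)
Your proof is correct and follows essentially the same route as the paper: both reduce $\oplus$-recovery to solving a single linear system over~$\bF_2$ assembled from the given generator matrices, with existence of a solution characterizing feasibility and any solution yielding the factors. The only cosmetic difference is that the paper first extracts a maximal $\bF_2$-linearly independent subset of the union of the bases before solving, whereas you solve the (possibly rank-deficient) stacked system~$\boldx\odot G=\boldc$ directly; both give the same algorithm, the same correctness argument, and a polynomial running time.
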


\begin{proof}
	Let~$\cB_i\triangleq\{\boldb_{i,1},\ldots,\boldb_{i,b_i}\}\subseteq \cC_i$ be an~$\bF_2$-basis for~$\cC_i$, where~$b_i=\dim_{\bF_2}(\cC_i)$ for all~$i$, and let~$\cB'\subseteq \cup_{i=1}^F\cB_i$ be a maximal subset which is linearly independent over~$\bF_2$. 
	We first argue that~$\cB'$ is an~$\bF_2$-basis for the sum~$\cC\triangleq\FtwoSpan(\cup_{i=1}^F\cC_i)$, i.e., that~$\cC=\FtwoSpan(\cB')$, by showing bidirectional containment. The containment $\FtwoSpan(\cB')\subseteq\cC$ is clear since~$\cB'=\cup_{i=1}^F(\cB'\cap \cB_i)$, and each~$\cB'\cap\cB_i$ spans a subset of~$\cC_i$. To prove the containment $\cC\subseteq \FtwoSpan(\cB')$, assume for contradiction that there exists~$\boldc'\in\cC\setminus \FtwoSpan(\cB')$, which by definition can be written as~$\boldc'=\oplus_{j\in J}\boldc_j $ for some~$J\subseteq [F]$, where~$\boldc_j\in\cC_j$ for all~$j\in J$. Each~$\boldc_j$ belongs to~$\FtwoSpan(\cB')$; otherwise, some additional elements from~$\cB_j$ can be added to~$\cB'$, in contradiction to it being maximal.
	
	The above argument implies a simple algorithm for solving~$\oplus$-recovery, as follows. Suppose that some bases $\cB_1,\ldots,\cB_F$ of~$\cC_1,\ldots,\cC_F$, respectively, are given as input (as rows of the respective generator matrices, see Remark~\ref{remark:bases}). Given these bases, find~$\cB'$ by sequentially adding elements from~$\cup_{i=1}^F\cB_i$ one-by-one, while skipping elements that belong to the span of the elements added thus far. Then, arrange the elements of~$\cB'$ as rows of a matrix~$B$, and given~$\boldc$ solve the equation~$\boldx\odot B=\boldc$. According to the previous argument, this equation has a solution if and only if the required~$\boldc_1,\ldots,\boldc_F$ exist. If exist, each~$\boldc_i$ can be found by linearly combining the elements of~$\cB'\cap\cB_i$ according to the respective coefficients in the solution~$\boldx$.
\end{proof}

The complexity of the above algorithm can be seen as at most cubic in~$\Delta$ and linear in~$n$, since finding~$\cB'$ may require~$\Delta$ Gaussian elimination algorithms, where~$\Delta=\sum_{i=1}^F\dim_{\bF_2}(\cC_i)$, each with complexity at most~$O(\Delta^2 n)$. Then, solving the equation~$\boldx B=\boldc$ takes at most~$O(\Delta^2 n)$ as well. The overall complexity is therefore~$O(\Delta^3n)$.

\begin{remark}\label{remark:XORuniqueness}
	In general, the result of~$\oplus$-recovery might not be unique. For example, this would be the case for~$F=2$ with subcodes~$\cC_1$ and~$\cC_2$ such that~$\cC_1\cap\cC_2=\{\1,\boldv\}$, and~$\boldc=\boldv$. Then, it is clear that~$\boldc_1=\1,\boldc_2=\boldv$ and~$\boldc_1=\boldv,\boldc_2=\1$ are both valid solutions. This is the reason for encoding compositional structures using a product of subcodes (e.g.,~$\cC_1\times\cC_2$ for~$F=2$), which implies that~$\cC_1\cap\cC_2=\{\1\}$.
\end{remark}

\subsection{$\Sigma$-recovery}\label{section:SigmaRecovery}
The~$\Sigma$-recovery problem is formally defined as follows.
\begin{itemize}
	\item \textbf{Input}: A code~$\cC$, a vector $\bolds=\sum_{i\in[s]}\boldc_i$ for some~$\boldc_i$'s in~$\cC$, and the integer~$s$.
	\item \textbf{Goal}: Find~$\boldc_1,\ldots,\boldc_s$.
\end{itemize}
That is, a~$\Sigma$-recovery algorithm receives as input a vector obtained by bundling a set~$\cS$ of~$s$ codewords from~$\cC$, and finds these codewords. We assume that~$s$ is at most the information capacity of~$\cC$, i.e., if~$\cC$ is~$\epsilon$-balanced then~$s<\frac{1}{2}+\frac{1}{4\epsilon}$ (see Section~\ref{section:finitesets}). Otherwise, a solution need not be unique, and the respective vector~$\bolds$ is not guaranteed to provide exact recall. 

Clearly, the efficiency of a~$\Sigma$-recovery algorithm depends on the representation of~$\cC$; we assume here that~$\cC$ is linear and is represented by its parity check matrix (see Section~\ref{section:preliminariesCodes}). As mentioned earlier, $\Sigma$-recovery can be solved via exhaustive search, i.e., one can traverse all codewords~$\boldc\in\cC$ (e.g., by traversing~$\boldx G$ for all~$\boldx\in\bF_2^{\dim\cC}$) and test if~$\boldc\in\cS$ by following the description in Section~\ref{section:examples}; this requires~$|\cC|$ many evaluations. One can heuristically address~$\Sigma$-recovery using resonator networks, and yet, this essentially amounts to exhaustive search; the full details of this approach are given in Section~\ref{section:experiments}. 

Our algorithm is based on \textit{reducing the size} of the search space using the linear properties of~$\cC$. In particular, by observing the input vector~$\bolds$ one can deduce several properties about the set of codewords~$\cS$. Combining these properties with the fact that any~$\boldc\in\cC$ is a solution to the equation~$H\boldx^\intercal=\1$, the size of the search space can be reduced dramatically.

Before presenting it formally, we exemplify the intuition behind the algorithm. Denote the input vector by~$\bolds=(s_1,\ldots,s_n)$, where each~$s_i$ is an integer, and suppose that~$s_i=s$ for some~$i\in[n]$. Then, one can infer that the~$i$'th entry of \textit{every} codeword in~$\cS$ equals~$1$; similarly, when~$s_i=-s$ one can infer that the~$i$'th entry of every codeword in~$\cS$ equals~$-1$. As a result, all codewords in~$\cS$ are a solution to the equation~$H\boldx'^\intercal=\1$, where~$\boldx'=(x_1',\ldots,x_n')$ is a vector of variables and constants such that~$x_i'=1$ if~$s_i=s$, $x_i'=-1$ if~$s_i=-s$, and otherwise~$x_i'$ is a variable. Namely, we consider the equation~$H\boldx^\intercal=\1$, and fix variables~$x_i$ to constants whenever we are certain that all the~$\boldc_i$'s agree in the~$i$'th entry. Due to the linear structure of~$\cC$, it will be shown that this results in a potentially exponential reduction of the size of the search space---if~$\ell$ entries are fixed it will be shown that a known coding-theoretic bound on the size of the search space shrinks by a multiplicative factor of~$2^{\Theta(\ell)}$ (referring to~$\epsilon$ as a constant).

Formally, for~$i\in\{0,1,\ldots,s\}$ let~$A_i\subseteq[n]$ be the set of indices of~$\bolds$ in which the~$\boldc_i$'s have precisely~$i$-many~$-1$ entries, i.e.,~$s_j=s-2i$ for every~$j\in A_i$. Also, denote~$\boldc_i=(c_{i,1},\ldots,c_{i,n})$ for every~$i\in[s]$. 

\begin{proposition}\label{proposition:A0As}
	For every~$j\in A_0$ we have~$c_{i,j}=1$ for all~$i\in[s]$ and for every~$j\in A_s$ we have~$c_{i,j}=-1$ for all~$i\in[s]$. 
\end{proposition}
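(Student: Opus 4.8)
The plan is to prove this by examining what the value $s_j = s - 2i$ means entry-by-entry. Recall that $\bolds = \sum_{i=1}^s \boldc_i$ where each $\boldc_i \in \{\pm 1\}^n$, so for a fixed coordinate $j \in [n]$, the $j$-th entry of the bundle is $s_j = \sum_{i=1}^s c_{i,j}$. Since each $c_{i,j} \in \{+1, -1\}$, if exactly $t$ of the $s$ summands equal $-1$ and the remaining $s - t$ equal $+1$, then $s_j = (s-t) - t = s - 2t$. Hence $A_i$ (the set of coordinates where $\boldc_1,\ldots,\boldc_s$ collectively have exactly $i$ entries equal to $-1$) is precisely $\{ j : s_j = s - 2i \}$, consistent with the definition given just before the proposition.

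The key observation is the boundary behavior. First I would treat $A_0$: if $j \in A_0$, then exactly $0$ of the entries $c_{1,j}, \ldots, c_{s,j}$ equal $-1$, which forces all of them to equal $+1$; that is, $c_{i,j} = 1$ for every $i \in [s]$. Symmetrically, for $A_s$: if $j \in A_s$, then exactly $s$ of the entries equal $-1$, i.e., all $s$ of them, so $c_{i,j} = -1$ for every $i \in [s]$. This is essentially immediate from the fact that $\{+1,-1\}$ has only two elements, so ``count of $-1$'s is $0$'' (resp. ``is $s$'') pins down every coordinate.

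A cleaner way to phrase the same argument, which I would probably use, is via the extreme values of $s_j$: since $s_j = \sum_{i=1}^s c_{i,j}$ is a sum of $s$ terms each in $\{\pm 1\}$, we always have $-s \le s_j \le s$, with $s_j = s$ if and only if every $c_{i,j} = 1$, and $s_j = -s$ if and only if every $c_{i,j} = -1$. Then $j \in A_0$ means $s_j = s$, giving the first claim, and $j \in A_s$ means $s_j = -s$, giving the second.

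There is no real obstacle here — the statement is a direct consequence of unpacking the definition of bundling over $\{\pm 1\}$ vectors. The only thing to be mildly careful about is matching the indexing convention (``$i$-many $-1$ entries'' corresponding to $s_j = s - 2i$), but this is bookkeeping rather than a genuine difficulty. The proposition is a warm-up lemma that isolates the coordinates the $\Sigma$-recovery algorithm can fix as constants before solving $H\boldx^\intercal = \1$.
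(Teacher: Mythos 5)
Your proof is correct and matches the paper's treatment: the paper states this proposition without proof precisely because it is, as you observe, an immediate unpacking of the definition of $A_i$ (the count of $-1$'s among $c_{1,j},\ldots,c_{s,j}$ being $0$ or $s$ forces every entry to be $+1$ or $-1$, respectively). Your bookkeeping of $s_j = s - 2t$ and the extreme-value phrasing are both fine and add nothing beyond what the paper intends.
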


Moreover, we show that a similar principle holds for other entries of~$\bolds$, which are not necessarily equal to~$\pm s$, and hence the dimension of the search space could be reduced even further. In what follows we use the abbreviated notation~$\floor{s}_i$ to denote the largest integer such that~$i\floor{s}_i<s$. For example,~$\floor{6}_2=2$ since~$2\cdot2<6$ and~$2\cdot 3=6$.

\begin{lemma}\label{lemma:fixedEntries}
	For every~$i\in\{1,\ldots,\floor{s/2}\}$ and every set~$\cB\subseteq A_i$ of size~$\floor{s}_{i}$, there exists~$\boldc_j$ such that~$c_{j,b}=1$ for all~$b\in\cB$. Similarly, for every~$i\in\{\floor{s/2}+1,\ldots,s-1\}$ and every set~$\cD\subseteq A_i$ of size~$\floor{s}_{s-i}$ there exists~$\boldc_j$ such that~$c_{j,d}=-1$ for all~$d\in\cD$. 
\end{lemma}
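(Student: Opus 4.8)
My plan is to argue by a counting/pigeonhole argument applied to the multiset of codewords $\boldc_1,\ldots,\boldc_s$ restricted to the coordinate set $A_i$, exactly mirroring the intuition already given for $A_0$ and $A_s$ in Proposition~\ref{proposition:A0As}. Fix $i\in\{1,\ldots,\floor{s/2}\}$ and a set $\cB\subseteq A_i$ with $|\cB|=\floor{s}_i$. By definition of $A_i$, for each coordinate $j\in\cB$ we have $s_j=s-2i$, which means that among the $s$ bits $c_{1,j},\ldots,c_{s,j}$ exactly $i$ of them equal $-1$ and the remaining $s-i$ equal $1$ (since each $-1$ contributes $-2$ to the sum relative to the all-ones baseline $s$). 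The key step is then to count ``bad'' pairs: for each $j\in\cB$ there are exactly $i$ indices $t\in[s]$ with $c_{t,j}=-1$, so the total number of pairs $(t,j)\in[s]\times\cB$ with $c_{t,j}=-1$ is exactly $i\cdot|\cB|=i\cdot\floor{s}_i$. By the defining property of $\floor{s}_i$ we have $i\cdot\floor{s}_i<s$, hence there are strictly fewer than $s$ such bad pairs. Therefore, by pigeonhole over the $s$ choices of $t$, there must exist some index $j^\star\in[s]$ that appears in no bad pair, i.e., some $\boldc_{j^\star}$ with $c_{j^\star,b}=1$ for every $b\in\cB$. That is the desired codeword.

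The second statement is entirely symmetric: for $i\in\{\floor{s/2}+1,\ldots,s-1\}$ and $\cD\subseteq A_i$ with $|\cD|=\floor{s}_{s-i}$, each coordinate $d\in\cD$ has $s_d=s-2i$, so among $c_{1,d},\ldots,c_{s,d}$ exactly $i$ entries are $-1$ and $s-i$ entries are $+1$; now I count pairs $(t,d)$ with $c_{t,d}=+1$, of which there are exactly $(s-i)\cdot|\cD|=(s-i)\cdot\floor{s}_{s-i}<s$, and pigeonhole over $t\in[s]$ again produces an index $j^\star$ that is $+1$ in no $d\in\cD$, i.e., $c_{j^\star,d}=-1$ for all $d\in\cD$. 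One can also simply reduce the second case to the first by replacing $i$ with $s-i$ and swapping the roles of $+1$ and $-1$, but writing it out directly is no harder.

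The only genuinely delicate point is keeping the arithmetic of the $\pm1$ encoding straight---specifically verifying that $s_j=s-2i$ is equivalent to ``exactly $i$ of the bits in column $j$ are $-1$'' (this uses that the $\boldc_t$ are $\pm1$-valued and that $\sum_{t=1}^s c_{t,j}=s_j$), and making sure the strict inequality $i\cdot\floor{s}_i<s$ is exactly the definition stated in the excerpt so that the pigeonhole is strict rather than merely $\le$. Once those are pinned down, the argument is just counting and pigeonhole with no further obstacle; there is no need to invoke linearity of $\cC$ here, as the statement is purely about the bundle $\bolds$ and its summands.
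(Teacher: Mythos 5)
Your proof is correct and follows essentially the same argument as the paper's: the paper phrases it as a union bound over the columns of the $s\times\floor{s}_i$ matrix of restricted bits (at most $i\cdot\floor{s}_i<s$ rows can contain a $-1$), which is the same double counting of ``bad'' pairs you carry out, and the second case is handled identically by swapping the roles of $1$ and $-1$. No gaps.
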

\begin{proof}
	Let~$i\in\{1,\ldots,\floor{s/2}\}$, and let~$\cB$ be any subset of~$A_i$ of size~$\floor{s}_{i}$. Consider the bits~$\{c_{j,b}\}_{(j,b)\in [s]\times \cB}$ as a matrix~$M$ of size~$s\times \floor{s}_i$, and observe that~$M$ has exactly~$i$ entries with~$-1$ at each column. Therefore, in any set of~$j$ such columns, the number of rows which contain~$-1$ is at most~$ij$. Since~$M$ has~$s$ rows and~$\floor{s}_i$ columns, and since~$i\cdot \floor{s}_i<s$, it follows that there exists a row with no~$-1$'s; this concludes the first part of the claim. The second part is proved by repeating the proof of the first part, and replacing the roles of~$1$'s and~$-1$'s in~$M$. 
\end{proof}

Our algorithm works in a recursive manner. Each recursive step begins by identifying the sets $A_0,A_1,\ldots,A_s$, and defining a linear equation~$H\boldx'=\1$. In this linear equation, all the entries of~$\boldx'$ indexed by~$A_0$ are fixed to~$1$, all the entries indexed by~$A_s$ are fixed to~$-1$, and some~$\floor{s}_i$ entries in some~$A_i$, $i\in\{1,\ldots,s-1\}$ are fixed to either~$1$ or~$-1$ (depending if~$i\le\floor{s/2}$ or not). To maximize the number of fixed entries, and thereby reducing the size of the search space as much as possible, we will find~$i\in\{1,\ldots,s-1\}$ such that~$\min\{|A_i|, \floor{s}_{\min\{i,s-i\}}\}$ is maximum\footnote{The minimum operation in the subscript of~$\floor{s}_{\min\{i,s-i\}}$ is meant to distinguish between the cases~$i\le\floor{s/2}$ and~$i>\floor{s/2}$ (see Lemma~\ref{lemma:fixedEntries}). The minimum operation between~$\floor{s}_{\min\{i,s-i\}}$ and~$|A_i|$ is for edge cases where~$|A_i|$ is smaller than~$\floor{s}_{\min\{i,s-i\}}$, in which all entries of~$|A_i|$ could be fixed.}. 

We proceed to solve the equation system~$H\boldx'=\1$, i.e., using standard linear algebra over~$\bF_2$ we find a solution space of the form~$\FtwoSpan\{\boldb_1,\boldb_2,\ldots\}\oplus\boldl$ for some basis vectors~$\boldb_i$ and some shift vector~$\boldl$ (notice that this entire solution space is contained in the code~$\cC$). Then, we traverse the solution space in search of a codeword~$\boldc\in\cS$; the truth value of~$\boldc\in\cS$ can be decided by computing~$\bolds\boldc^\intercal$, see Section~\ref{section:examples}. Once one of the~$\boldc_i$'s is found successfully, it is reported to the user, and a recursive call for the same algorithm is be made with the parameters~$\bolds-\boldc_i$, $s-1$, and the same parity check matrix~$H$ which represents~$\cC$. In the basis of the recursion, where~$s=1$, we report~$\bolds$ and conclude. These steps are presented in Algorithm~\ref{algorithm:SigmaRecovery}. We now turn to prove the correctness of Algorithm~\ref{algorithm:SigmaRecovery}, and present several statements regarding its performance. 

\begin{algorithm}
	\caption{$\Sigma$-recovery of linear codes}\label{algorithm:SigmaRecovery}
	\begin{algorithmic}[1]
		\Procedure{$\Sigma$-recovery}{$H,\bolds,s$}\Comment{$\bolds$ is~$s$-bundling of codewords from~$\cC=\ker H$.}
		\State If~$s=1$ \Return $\bolds$.				
		\For{$i=0$ to $s$}
		\State Let~$A_i\subset[n]$ be the set of indices~$j$ such that~$s_j=s-2i$.
		\EndFor
		\State Let~$i_\text{max}=\arg\max_{i\in[n]}\min\{|A_i|, \floor{s}_{\min\{i,s-i\}}\}$.
		\State Let~$g_\text{max}=\min\{|A_{i_\text{max}}|, \floor{s}_{\min\{i_\text{max},s-i_\text{max}\}}\}$, and pick any~$\cB\subseteq A_{i_\text{max}}$ of size~$g_\text{max}$.
		\State Let~$\boldx'=(x_1',\ldots,x_n')$ be a vector of variables and constants such that
		\begin{align*}
			x_i'=\begin{cases}
				\phantom{-}1  & \mbox{if }i\in A_0\\
				-1 & \mbox{if }i\in A_s\\
				\phantom{-}1 & \mbox{if }i\in \cB \mbox{ and }i_\text{max}\le\floor{s/2}\\
				-1 & \mbox{if }i\in \cB \mbox{ and }i_\text{max}>\floor{s/2}\\
				\text{variable} &\text{otherwise}.
			\end{cases}.
		\end{align*}
		\State Solve~$H\boldx'^\intercal=\1$ over~$\bF_2$, and let~$\cR$ be the set of solutions (an affine subspace).
		\For{$\boldc\in\cR$}
		\If{$\boldc$ is in the set represented by~$\cS$ (see Section~\ref{section:examples})}
		\State Output~$\boldc$, call~$\Sigma\textrm{-RECOVERY}(H,\bolds-\boldc,s-1)$, and \Return
		\EndIf
		\EndFor
		\EndProcedure
	\end{algorithmic}
\end{algorithm}		

\begin{theorem}
	Let~$\cC$ be an~$\epsilon$-balanced code for some~$\epsilon>0$, and let~$\bolds=\sum_{i=1}^s\boldc_i$ where~$\boldc_i\in\cC$ for all~$i\in[s]$, and where~$s<\frac{1}{2}+\frac{1}{4\epsilon}$. Then, Algorithm~\ref{algorithm:SigmaRecovery} returns~$\boldc_1,\ldots,\boldc_s$. 
\end{theorem}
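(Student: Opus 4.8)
The natural approach is induction on $s$. The base case $s=1$ is the first line of Algorithm~\ref{algorithm:SigmaRecovery}, which returns $\bolds=\boldc_1$. For the inductive step, assume correctness for $s-1$ and run the procedure on $(H,\bolds,s)$. I would establish three facts: \textbf{(i)} the affine solution space $\cR$ of the system $H\boldx'^\intercal=\1$ always contains at least one of the codewords $\boldc_1,\dots,\boldc_s$; \textbf{(ii)} the membership test in the inner loop is sound, so the codeword $\boldc^*$ that the loop outputs is necessarily one of $\boldc_1,\dots,\boldc_s$; and \textbf{(iii)} the ensuing recursive call on $(H,\bolds-\boldc^*,s-1)$ is again a legitimate instance, so the inductive hypothesis applies to it and, together with $\boldc^*$, yields the full list.

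For \textbf{(i)}, note first that $\cR\subseteq\{\boldx:H\boldx^\intercal=\1\}=\cC$, so every vector in $\cR$ is a codeword. The coordinates fixed by $\boldx'$ split into those indexed by $A_0\cup A_s$ and those indexed by $\cB\subseteq A_{i_{\max}}$. By Proposition~\ref{proposition:A0As}, \emph{every} $\boldc_i$ carries the value that $\boldx'$ prescribes on $A_0\cup A_s$ (namely $1$ on $A_0$ and $-1$ on $A_s$). For the coordinates in $\cB$, if $1\le i_{\max}\le s-1$ then Lemma~\ref{lemma:fixedEntries} hands over a \emph{single} index $j$ such that $\boldc_j$ agrees with $\boldx'$ on all of $\cB$ (its proof goes through for any $\cB$ with $|\cB|\le\floor{s}_{\min\{i_{\max},s-i_{\max}\}}$, which is exactly the size $g_{\max}$ the algorithm uses); and if $i_{\max}\in\{0,s\}$ then $\cB\subseteq A_0$ or $\cB\subseteq A_s$, so by Proposition~\ref{proposition:A0As} again \emph{every} $\boldc_i$ matches on $\cB$. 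Either way some particular $\boldc_j$ agrees with $\boldx'$ on all fixed coordinates, and since $\boldc_j\in\cC=\ker H$ gives $H\boldc_j^\intercal=\1$, we get $\boldc_j\in\cR$; in particular $\cR\neq\emptyset$.

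For \textbf{(ii)}, the test ``$\boldc$ lies in the set represented by $\cS$'' is carried out (as in Section~\ref{section:examples}) by comparing $\bolds\boldc^\intercal$ to the threshold $n(\epsilon+1/2)$. Since $\cC$ is $\epsilon$-balanced it is $2\epsilon$-incoherent (Proposition~\ref{proposition:incoherentBalanced}), so for a codeword $\boldc\in\cR$ we have $\bolds\boldc_j^\intercal=n+\sum_{i\neq j}\boldc_i\boldc_j^\intercal\ge n-(s-1)\cdot 2\epsilon n$ when $\boldc=\boldc_j$, and $\bolds\boldc^\intercal=\sum_{i\in[s]}\boldc_i\boldc^\intercal\le 2s\epsilon n$ when $\boldc\notin\{\boldc_1,\dots,\boldc_s\}$. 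The hypothesis $s<\tfrac12+\tfrac1{4\epsilon}$ is precisely the statement $n-(s-1)\cdot 2\epsilon n>n(\epsilon+1/2)>2s\epsilon n$, with $n(\epsilon+1/2)$ the midpoint of these extremes, so the threshold separates the two cases; combined with \textbf{(i)} (some $\boldc_j\in\cR$ passes the test), the inner loop---which terminates since $\cR$ is finite---outputs $\boldc^*\in\{\boldc_1,\dots,\boldc_s\}$. Writing $\boldc^*=\boldc_{i_0}$, part \textbf{(iii)} follows: $\bolds-\boldc^*=\sum_{i\in[s]\setminus\{i_0\}}\boldc_i$ is a bundling of $s-1$ codewords of the same $\epsilon$-balanced code $\cC$, and $s-1<\tfrac12+\tfrac1{4\epsilon}$ still holds, so by the inductive hypothesis the recursive call returns $\{\boldc_i\}_{i\in[s]\setminus\{i_0\}}$; with the reported $\boldc^*$ this is $\boldc_1,\dots,\boldc_s$. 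The whole procedure terminates because $s$ strictly decreases to the base case.

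The only genuine obstacle I anticipate is part \textbf{(i)}: one must use the \emph{specific} codeword supplied by Lemma~\ref{lemma:fixedEntries} (an arbitrary $\boldc_i$ need not agree with the $\cB$-coordinates), and then observe that its agreement with the $A_0$ and $A_s$ coordinates comes for free from Proposition~\ref{proposition:A0As}; everything else amounts to assembling the incoherence inequality and carrying out the induction. It is worth remarking that distinctness of the $\boldc_i$ is not needed: if $\cS$ is a multiset, the threshold test still flags a repeated codeword (its inner product with $\bolds$ can only grow), and peeling off one copy leaves a valid $(s-1)$-instance.
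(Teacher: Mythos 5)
Your proof is correct and takes essentially the same route as the paper's: your part \textbf{(i)} is the paper's statement (a), derived exactly as the paper does from Proposition~\ref{proposition:A0As} and Lemma~\ref{lemma:fixedEntries}, and your part \textbf{(ii)} is the paper's statement (b), which the paper attributes to the hypothesis $s<\tfrac12+\tfrac1{4\epsilon}$. You merely fill in details the paper leaves implicit---the explicit induction, the verification that $n(\epsilon+1/2)$ separates the two inner-product regimes, and the observation that Lemma~\ref{lemma:fixedEntries} still applies when $|\cB|<\floor{s}_{\min\{i,s-i\}}$---so the underlying argument is the same.
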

\begin{proof}
	Clearly, the theorem holds if and only if (a) there exists~$j\in[s]$ such that~$\boldc_j\in\cR$ (where~$\cR$ is defined in Algorithm~\ref{algorithm:SigmaRecovery}); and (b) it is possible to identify that $\boldc_j\in\cS$ using the algorithms in Section~\ref{section:examples} (for finite sets). Statement~(a) follows from Proposition~\ref{proposition:A0As} and Lemma~\ref{lemma:fixedEntries}, and Statement~(b) follows from the fact that~$s<\frac{1}{2}+\frac{1}{4\epsilon}$.
\end{proof}

Due to the random structure of the underlying linear code~$\cC$, and the arbitrary choice of codewords $\boldc_1,\ldots,\boldc_s\in\cC$, exact runtime bound for Algorithm~\ref{algorithm:SigmaRecovery} is hard to come by. Nevertheless, it is evident that a certain number~$\ell>0$ of entries of~$\boldx'$ are fixed at each iteration. In what follows it is shown that the extent of the reduction in the size of the search space can be estimated using this parameter~$\ell$. We make two statements in this regard, both of which rely on the following proposition.

\begin{proposition}\label{proposition:R'}
	Suppose~$\ell>0$ entries of~$\boldx'$ are fixed in some recursive step of Algorithm~\ref{algorithm:SigmaRecovery}. Then, the solution space~$\cR$ is a (not-necessarily-linear) subcode of~$\cC$ in which~$\ell$ entries are identical across all~$\boldc\in\cR$. Let~$\cR'$ be the result of deleting all fixed entries of~$\cR$. Then,~$\cR'$ is a code of length~$n-\ell$, size~$|\cR|$, and minimum distance at least that of~$\cC$. 
\end{proposition}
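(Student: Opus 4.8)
The plan is to derive all three conclusions from a single observation: deleting from a code a set of coordinates on which the code is constant is an isometry onto its image, so it changes neither the number of codewords nor the pairwise Hamming distances.

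First I would record what $\cR$ is. By the description of the recursive step, $\cR$ is the set of $\boldx\in\bF_2^n$ satisfying $H\boldx^\intercal=\1$ together with the constraints $x_i=$ (a prescribed $\pm1$ value) for each $i$ in the set $F$ of the $\ell$ fixed coordinates. In particular $\cR\subseteq\ker H=\cC$, so $\cR$ is a subcode of $\cC$; it is a coset of the linear solution space of the associated homogeneous system, hence an affine subspace and not linear in general. By construction every element of $\cR$ takes the same prescribed value on each coordinate of $F$, which is exactly the asserted ``$\ell$ entries identical across all $\boldc\in\cR$''. (In the algorithm $\cR$ is moreover nonempty, since by Proposition~\ref{proposition:A0As} and Lemma~\ref{lemma:fixedEntries} the codeword $\boldc_j$ produced there satisfies all the fixed-coordinate constraints and lies in $\cC$, hence in $\cR$; but the proposition holds vacuously even if $\cR=\emptyset$.)

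Next I would analyze the punctured code $\cR'$. Let $\pi\colon\bF_2^n\to\bF_2^{n-\ell}$ be the projection deleting the coordinates in $F$, so $\cR'=\pi(\cR)$ has length $n-\ell$. Since all elements of $\cR$ agree on $F$, any $\boldc,\boldd\in\cR$ with $\pi(\boldc)=\pi(\boldd)$ agree on $[n]\setminus F$ and on $F$, hence $\boldc=\boldd$; thus $\pi|_{\cR}$ is a bijection onto $\cR'$ and $|\cR'|=|\cR|$. For the same reason $\boldc$ and $\boldd$ can differ only at coordinates outside $F$, so $d_H(\pi(\boldc),\pi(\boldd))=d_H(\boldc,\boldd)$ for all $\boldc,\boldd\in\cR$. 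Consequently the minimum distance of $\cR'$ equals $\min\{d_H(\boldc,\boldd): \boldc,\boldd\in\cR,\ \boldc\ne\boldd\}$, which, since $\cR\subseteq\cC$, is at least $\min\{d_H(\boldc,\boldd): \boldc,\boldd\in\cC,\ \boldc\ne\boldd\}$, i.e.\ the minimum distance of $\cC$ (if $|\cR|\le1$ the claim is vacuous).

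I do not anticipate a genuine obstacle here; the only place calling for care is phrasing the puncturing step so that the three conclusions---that $\cR'$ has length $n-\ell$, that $|\cR'|=|\cR|$, and that the minimum distance of $\cR'$ is at least that of $\cC$---are seen to follow simultaneously from the constancy of $\cR$ on $F$ and from $\cR\subseteq\cC$, rather than being argued one at a time.
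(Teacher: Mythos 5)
Your proof is correct and follows essentially the same route as the paper's: both arguments note that $\cR\subseteq\ker H=\cC$ gives the minimum-distance bound, and that deleting coordinates on which all of $\cR$ agrees preserves both the number of codewords and the pairwise Hamming distances. Your version merely packages this as an explicit isometry statement, which is a slightly more formal phrasing of the same idea.
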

\begin{proof}
	The statement about~$\cR$ are clear from its definition. Since~$\cR\subseteq\cC$, it follows that~$\cR$ has minimum distance at least that of~$\cC$. Since~$\cR'$ is obtained from~$\cR$ by deleting entries in which all codewords of~$\cR$ agree, it follows that no two words in~$\cR$ become identical in~$\cR'$, and hence~$|\cR|=|\cR'|$. From the latter argument it also follows that the minimum Hamming distance of~$\cR'$ is identical to that~$\cR$, which in turn is at least the minimum distance of~$\cC$. 
\end{proof}

A priori, it is not clear that fixing some of the entries of~$\boldx'$, as done in Algorithm~\ref{algorithm:SigmaRecovery}, indeed reduces the size of the search space; this would depend on the rank of the submatrix of~$H$ which consists of columns of~$H$ indexed by the variable entries of~$\boldx'$. However, having understood that~$\cR'$ is a code of size~$|\cR|$, length~$n-\ell$, and minimum distance at least that of~$\cC$, we are able to use known coding-theoretic results to bound the size of~$\cR'$, and consequently the size of the search space~$\cR$.


\begin{lemma}\label{lemma:MRRW}
	Following the notation of Proposition~\ref{proposition:R'}, we have that
	\begin{align*}
		|\cR|\le
		\frac
		{2^{n   (h(1/2-\tfrac{n}{n-\ell}\sqrt{(1/2-\epsilon)(1/2+\epsilon)}))+o_n(1)}}
		{2^{\ell(h(1/2-\tfrac{n}{n-\ell}\sqrt{(1/2-\epsilon)(1/2+\epsilon)}))+o_n(1)}}
	\end{align*}
	where~$h$ is the binary entropy function (i.e., $h(z)=-z\log(z)-(1-z)\log(1-z)$ for~$0\le z\le 1$), and~$o_n(1)$ is an expression which goes to~$0$ as~$n$ tends to infinity.
\end{lemma}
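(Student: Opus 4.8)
The plan is to feed the code~$\cR'$ produced by Proposition~\ref{proposition:R'} into a classical asymptotic estimate from coding theory --- the first linear programming (MRRW) bound --- and then rearrange the result into the displayed ratio. First I would pin down the parameters of~$\cR'$: by Proposition~\ref{proposition:R'} it is a binary code of length~$n-\ell$, of size~$|\cR|$, and of minimum distance at least that of~$\cC$; and since~$\cC$ is a linear~$\epsilon$-balanced code, its minimum distance equals its minimum weight, which is at least~$(1/2-\epsilon)n$ by Definition~\ref{definition:epsBalanced}. Hence the relative distance~$\delta$ of~$\cR'$ satisfies~$\delta\ge\delta_0\triangleq\frac{(1/2-\epsilon)n}{n-\ell}$.

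Next I would invoke the first linear programming bound~\cite{roth2006introduction}: any binary code of length~$N$ with relative distance~$\delta\le 1/2$ has at most~$2^{N(h(1/2-\sqrt{\delta(1-\delta)})+o_N(1))}$ codewords. Applied to~$\cR'$ with~$N=n-\ell$, and using that~$\sqrt{\delta(1-\delta)}$ is non-decreasing on~$[0,1/2]$ (so that the exponent~$h(1/2-\sqrt{\delta(1-\delta)})$ is non-increasing there and is therefore largest at~$\delta=\delta_0$), this gives
\[
 |\cR|\;\le\;2^{(n-\ell)\left(h\left(\tfrac12-\sqrt{\delta_0(1-\delta_0)}\right)+o_n(1)\right)}.
\]
To reach the stated form I would simplify the radical using~$1-\delta_0=\frac{(1/2+\epsilon)n-\ell}{n-\ell}$ and~$(1/2+\epsilon)n-\ell\le(1/2+\epsilon)n$:
\[
 \sqrt{\delta_0(1-\delta_0)}\;=\;\frac{\sqrt{(1/2-\epsilon)n}\,\sqrt{(1/2+\epsilon)n-\ell}}{n-\ell}\;\le\;\frac{n}{n-\ell}\sqrt{(1/2-\epsilon)(1/2+\epsilon)},
\]
substitute this cleaner quantity for the radical, and split the exponent~$n-\ell$ into an~$n$-part and a~$(-\ell)$-part; this rewrites the bound as the displayed ratio, with the lower-order error terms absorbed into~$o_n(1)$.

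The main obstacle is invoking the linear programming bound in the correct asymptotic regime and handling the entropy-function monotonicity with care. One must check that~$\delta_0\le 1/2$, for otherwise~$\cR'$ has relative distance above~$1/2$ and one should instead appeal to the far stronger Plotkin bound; and one must verify that the argument of~$h$ stays inside~$[0,1/2]$ after the substitution, which pins down the regime~$\ell\le n(1-\sqrt{1-4\epsilon^2})$ in which the displayed bound is non-vacuous --- and correspondingly keep track of the direction in which monotonicity of~$h$ is used when passing from~$\sqrt{\delta_0(1-\delta_0)}$ to~$\frac{n}{n-\ell}\sqrt{(1/2-\epsilon)(1/2+\epsilon)}$. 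Once these regime and monotonicity checks are settled, the remaining manipulations are elementary algebra.
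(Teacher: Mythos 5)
Your approach is the same as the paper's: feed the punctured code~$\cR'$ from Proposition~\ref{proposition:R'} (length~$n-\ell$, size~$|\cR|$, minimum distance at least~$(1/2-\epsilon)n$) into the first linear programming (MRRW) bound and rewrite the resulting exponent~$(n-\ell)(\cdots)$ as the displayed ratio. You are in fact more explicit than the paper, which simply quotes the bound with the displayed radical. The one substantive issue is the step you yourself flag but do not resolve: the rigorous application of MRRW at relative distance~$\delta_0=\tfrac{(1/2-\epsilon)n}{n-\ell}$ gives the exponent~$h(1/2-\sqrt{\delta_0(1-\delta_0)})$, and your final ``simplification'' replaces~$\sqrt{\delta_0(1-\delta_0)}$ by the \emph{larger} quantity~$\tfrac{n}{n-\ell}\sqrt{(1/2-\epsilon)(1/2+\epsilon)}$ (since~$(1/2+\epsilon)n-\ell\le(1/2+\epsilon)n$). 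Because~$h$ is increasing on~$[0,1/2]$, enlarging the radical \emph{decreases}~$h(1/2-\cdot)$, so this substitution moves the right-hand side in the wrong direction for an upper bound: the displayed inequality does not follow from~$|\cR|\le 2^{(n-\ell)(h(1/2-\sqrt{\delta_0(1-\delta_0)})+o_n(1))}$ unless~$\ell=o(n)$, in which case the two radicals agree up to a~$1-o(1)$ factor and the discrepancy is absorbed into the~$o_n(1)$ terms. The paper's own one-line proof makes exactly the same silent substitution, so this is a defect of the lemma's statement rather than something you introduced; the clean fix is to keep~$\sqrt{\delta_0(1-\delta_0)}$ in the exponent (which only strengthens the intended qualitative conclusion that fixing~$\ell$ entries shrinks the MRRW bound exponentially in~$\ell$), or to restrict to~$\ell=o(n)$. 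Your regime checks --- $\delta_0\le 1/2$ iff~$\ell\le 2\epsilon n$, with the Plotkin bound taking over beyond that --- are correct and consistent with the paper's subsequent lemma.
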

\begin{proof}
	We apply the \textit{MRRW} bound~\cite[Sec.~4.5]{roth2006introduction} over the code~$\cR'$, whose length is~${n-\ell}$, and minimum distance is at least~$(1/2-\epsilon)n$ by Proposition~\ref{proposition:R'}. We get
	\begin{align*}
		\frac{\log|\cR'|}{n-\ell}\le h(1/2-\tfrac{n}{n-\ell}\sqrt{(1/2-\epsilon)(1/2+\epsilon)})+o_n(1),
	\end{align*}
	from which the claim follows since~$|\cR|=|\cR'|$.
\end{proof}

Lemma~\ref{lemma:MRRW} implies that fixing~$\ell$ entries in a certain step of our~$\Sigma$-recovery algorithm shrinks the MRRW bound on~$|\cR|$ by a factor which is exponential in~$\ell$ (relative to the MRRW bound applied on~$|\cC|$ in its entirety).

When~$\ell$ is large, we are able to provide a stronger bound. This follows from the \textit{Plotkin bound} in coding theory, which is more effective than the MRRW bound in Lemma~\ref{lemma:MRRW} in cases where the minimum distance of~$\cR'$ is at least half its length, i.e., if~$(1/2-\epsilon)n\ge \frac{n-\ell}{2}$.

\begin{lemma}
	Following the notation of Proposition~\ref{proposition:R'}, if~$\ell/n>2\epsilon$, we have that
	\begin{align*}
		|\cR|\le\frac{1-2\epsilon}{\ell/n-2\epsilon}.
	\end{align*}
\end{lemma}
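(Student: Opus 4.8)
The plan is to invoke the \textit{Plotkin bound} from coding theory on the punctured code $\cR'$ furnished by Proposition~\ref{proposition:R'}. Recall that $\cR'$ has length $n-\ell$, size $|\cR|$, and minimum distance at least $(1/2-\epsilon)n$. The hypothesis $\ell/n>2\epsilon$ is precisely what guarantees that the minimum distance of $\cR'$ exceeds half of its length: $(1/2-\epsilon)n>(n-\ell)/2$ holds iff $(1-2\epsilon)n>n-\ell$ iff $\ell>2\epsilon n$. This is the regime in which the Plotkin bound applies and, as remarked in the text preceding the statement, is sharper than the MRRW bound of Lemma~\ref{lemma:MRRW}.

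First I would recall the Plotkin bound in the form: a binary code of length $n'$ and minimum distance $d$ with $2d>n'$ has at most $\frac{2d}{2d-n'}$ codewords (see~\cite[Sec.~4.5]{roth2006introduction}). Applying it to $\cR'$ with $n'=n-\ell$ and $d$ the true minimum distance of $\cR'$ gives $|\cR'|\le \frac{2d}{2d-(n-\ell)}$. Next I would replace $d$ by its lower bound $(1/2-\epsilon)n$; the one point needing justification is monotonicity, namely that $d\mapsto \frac{2d}{2d-n'}$ is decreasing on $(n'/2,\infty)$ (its derivative is $-2n'/(2d-n')^2<0$), and that both $d$ and $(1/2-\epsilon)n$ lie in that interval since $(1-2\epsilon)n>n-\ell$ under the hypothesis. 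Hence substituting the smaller quantity only enlarges the right-hand side, and
\begin{align*}
	|\cR'|\le \frac{2(1/2-\epsilon)n}{2(1/2-\epsilon)n-(n-\ell)}=\frac{(1-2\epsilon)n}{\ell-2\epsilon n}=\frac{1-2\epsilon}{\ell/n-2\epsilon}.
\end{align*}
Finally, $|\cR|=|\cR'|$ by Proposition~\ref{proposition:R'}, which gives the claimed bound.

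I do not anticipate a genuine obstacle here; the delicate points are merely verifying the applicability condition $2d>n'$ (which is exactly the hypothesis $\ell/n>2\epsilon$) and the monotonicity step that lets us pass from the true minimum distance to its lower bound $(1/2-\epsilon)n$. The edge case $|\cR|=1$ is covered automatically since $\ell\le n$ forces the right-hand side $\frac{1-2\epsilon}{\ell/n-2\epsilon}\ge 1$. One could instead use the floor version $|\cR'|\le 2\floor{d/(2d-n')}$ of the Plotkin bound, but the continuous form above already yields the stated clean expression.
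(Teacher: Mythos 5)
Your proposal is correct and follows essentially the same route as the paper: apply the Plotkin bound to the punctured code $\cR'$ of length $n-\ell$ and minimum distance at least $(1/2-\epsilon)n$, note that the hypothesis $\ell/n>2\epsilon$ is exactly the applicability condition, and conclude via $|\cR|=|\cR'|$. The only cosmetic difference is that the paper substitutes the distance lower bound directly into the inequality $\frac{d}{n-\ell}\le\frac{1}{2-2/|\cR'|}$ before rearranging, which sidesteps the monotonicity check you perform after rearranging to the form $|\cR'|\le\frac{2d}{2d-(n-\ell)}$.
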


\begin{proof}
	We apply the \textit{Plotkin} bound~\cite[Prob.~4.23]{roth2006introduction} over~$\cR'$ and get
	\begin{align*}
		\frac{(1/2-\epsilon)n}{n-\ell}&\le \frac{1}{2-2/|\cR'|}\\
		\frac{2}{|\cR'|}&\ge \frac{\ell-2\epsilon n}{(1/2-\epsilon)n}\\
		|\cR'|&\le \frac{1-2\epsilon}{\ell/n-2\epsilon},
	\end{align*}
	where the last step follows since~$\ell-2\epsilon n>0$. This concludes the claim since~$|\cR|=|\cR'|$.
\end{proof}

\begin{remark}
	Our experiments show that these bounds are pessimistic, and the our algorithm outperforms exhaustive search often by an order of magnitude. 
\end{remark}

\begin{remark}
	It should be mentioned that a problem which resembles~$\Sigma$-recovery of linear codes was previously studied. The so-called \emph{binary adder channel}~\cite{hughes1996nearly} is a communication model between multiple senders and one receiver, which obtains the sum of the binary messages from the senders, and needs to decompose that sum to its constituent elements. However, in the binary adder channel there is no notion of recall, and therefore traditional code constructions for the binary adder channel are not necessarily incoherent, making them unsuitable for HDC. Furthermore, traditional code constructions for the binary adder channel are not linear. One exception is~\cite{liva2021coding}, which studied the binary adder channel with two senders and linear codes, and presented a decoding algorithm which is a special case of the above~$\Sigma$-recovery algorithm for~$s=2$.
\end{remark}

\section{Experiments}\label{section:experiments}

\subsection{State-of-the-art}
The state-of-the-art method for both $\oplus$-recovery and~$\Sigma$-recovery is resonator networks \cite{frady2020resonator,kent2020resonator}, and its improvements \cite{hersche2023decoding} and \cite{langenegger2023memory}. Although resonator networks are formally designed to address the~$\oplus$-recovery problem (see~\cite[Sec.~2]{kent2020resonator}), they can be extended to perform~$\Sigma$-recovery via \textit{explaining away}. That is, it was demonstrated experimentally (e.g.,~\cite[Fig.~4]{frady2020resonator}) that if one applies resonator networks over a bundle of bound vectors, the network tends to converge to one of the bound vectors, which can then be subtracted, or ``explained away,'' from the bundle. In detail, in resonator networks one maintains an estimated vector for each one of the input subcodes. After convergence, the estimated vectors are bound, the resulting bound vector is given as output, and then subtracted from the input vector. After subtraction, resonator networks can be applied again, until one if left with the zero vector (or close to it). 

Resonator networks, however, are not intended to perform~$\Sigma$-recovery from a single code (for~$F=1$ many subcodes, the formula for computing the estimation vector~$\hat{\boldo}_1$ in~\cite[Eq.~(3.5)]{kent2020resonator} does not apply). Hypothetically, for~$F=1$ one may set the update rule in resonator networks to~$\operatorname{sign}(X_1X_1^\intercal\bolds^\intercal)$ (where~$X_1$ is a matrix containing all codewords as columns), which essentially reduces to exhaustive search---since~$X_1^\intercal\bolds^\intercal$ is the inner product between the input vector~$\bolds$ and every codeword in the code, one can conclude the~$\Sigma$-recovery algorithm at this point according to the inner product values, and further computation using resonator networks is unnecessary. 

\subsection{$\oplus$-recovery}
For~$\oplus$ recovery we compare against the recent benchmark resonator networks implementation \texttt{torchhd} by~\cite{heddes2023torchhd}. For parameters~$n,k,F$ we randomly choose~$F$ subcodes~$\cC_1,\ldots,\cC_F$ of length~$n$ and dimension~$k$, randomly choose\footnote{A uniformly random codeword in a code~$\cC$ of dimension~$k$ is of the form~$\boldx G$, where~$\boldx$ is a uniformly random~$\{\pm1\}$ vector of length~$k$, and~$G$ is a generator matrix for~$\cC$.}~$\boldc_i\in\cC_i$ for every~$i\in[F]$, and compute the bound vector~$\bolds=\bigoplus_{i=1}^F\boldc_i$. We then implement our~$\oplus$-recovery algorithm (Sec.~\ref{section:oplusRecovery}) on~$\bolds$, and compare against the function \texttt{resonator} of~\cite{heddes2023torchhd}. 
We implemented both algorithms using standard Python libraries, including the \texttt{galois} library for~$\bF_2$ computations~\cite{Hostetter_Galois_2020}, and ran them on a standard laptop computer. 

In either algorithm, we declare ``success'' if the returned vectors~$\boldc_1',\ldots,\boldc_F'$ satisfy: (1) $\boldc_i'\in\cC_i$ for all~$i\in[F]$; and (2) $\bigoplus_{i\in[F]}\boldc_i'=\bolds$. Notice that in some cases the representation of~$\bolds$ as binding of one vector from each~$\cC_i$ is not unique (see Remark~\ref{remark:XORuniqueness}), and hence one cannot expect exact retrieval of the original~$\boldc_i$'s. These cases, however, are exceedingly rare for most parameter settings.

We note that for some parameter settings the listing of all codewords, which is a precursory step in resonator networks but not in our algorithm, takes a significant amount of time. Therefore, if listing out all codewords takes at least~$10$ times longer than our entire~$\oplus$-recovery algorithm, we consider that resonator networks failed.

Evidently, our~$\oplus$-recovery algorithm overwhlemingly outperforms the implementation of resonator networks by~\cite{heddes2023torchhd} in a wide range of parameters. We tested both algorithms in all combinations of~$n\in\{500,1000,2000\}$,~$k\in\{3,5,7\}$ and~$F\in\{3,4,5\}$. Each parameter setting was repeated~$10$ times. 

As we proved formally, our $\oplus$-recovery algorithm succeeded in~$100\%$ of the instances, where resonator networks achieved~$0\%$ success in most cases (i.e., it either failed or took more than~$10$ times longer than our methods). We report typical results for~$n=500$ in Table~\ref{table:XORn500}. Very similar results were achieved for~$n\in\{1000,2000\}$, with our algorithm concluding in less than~$0.2$ seconds (with~$\approx 0.02$ standard deviation), and resonator networks achieving~$0\%$ success in almost all cases.

\begin{table}[]
	\centering
	\begin{tabular}{|l|ccccccccc|}
		\hline
		$n$       & \multicolumn{9}{c|}{500}                                                                                                                                                                                                                                                                                                                                                                                                                                                                                                                                                                                                                                                                                            \\ \hline
		$k$       & \multicolumn{3}{c|}{3}                                                                                                                                                                                                                       & \multicolumn{3}{c|}{5}                                                                                                                                                                                                                        & \multicolumn{3}{c|}{7}                                                                                                                                                                                               \\ \hline
		$F$       & \multicolumn{1}{c|}{3}                                                        & \multicolumn{1}{c|}{4}                                                       & \multicolumn{1}{c|}{5}                                                        & \multicolumn{1}{c|}{3}                                                        & \multicolumn{1}{c|}{4}                                                        & \multicolumn{1}{c|}{5}                                                        & \multicolumn{1}{c|}{3}                                                       & \multicolumn{1}{c|}{4}                                                       & \multicolumn{1}{c|}{5}                                 \\ \hline
		Ours      & \multicolumn{1}{c|}{\begin{tabular}[c]{@{}c@{}}0.016 \\ (100\%)\end{tabular}} & \multicolumn{1}{c|}{\begin{tabular}[c]{@{}c@{}}0.021\\ (100\%)\end{tabular}} & \multicolumn{1}{c|}{\begin{tabular}[c]{@{}c@{}}0.032 \\ (100\%)\end{tabular}} & \multicolumn{1}{c|}{\begin{tabular}[c]{@{}c@{}}0.026 \\ (100\%)\end{tabular}} & \multicolumn{1}{c|}{\begin{tabular}[c]{@{}c@{}}0.035 \\ (100\%)\end{tabular}} & \multicolumn{1}{c|}{\begin{tabular}[c]{@{}c@{}}0.067 \\ (100\%)\end{tabular}} & \multicolumn{1}{c|}{\begin{tabular}[c]{@{}c@{}}0.040\\ (100\%)\end{tabular}} & \multicolumn{1}{c|}{\begin{tabular}[c]{@{}c@{}}0.083\\ (100\%)\end{tabular}} & \begin{tabular}[c]{@{}c@{}}0.14\\ (100\%)\end{tabular} \\ \hline
		Resonator & \multicolumn{1}{c|}{\begin{tabular}[c]{@{}c@{}}0.17 \\ (40\%)\end{tabular}}   & \multicolumn{1}{l|}{\begin{tabular}[c]{@{}l@{}}0.18\\ (10\%)\end{tabular}}   & \multicolumn{1}{c|}{$\dagger$}                                                & \multicolumn{1}{c|}{$\dagger$}                                                & \multicolumn{1}{c|}{$\dagger$}                                                & \multicolumn{1}{c|}{$\dagger$}                                                & \multicolumn{1}{c|}{$\dagger$}                                               & \multicolumn{1}{c|}{$\dagger$}                                               & $\dagger$                                              \\ \hline
	\end{tabular}
	\vspace{0.3cm}
	\caption{Run time and success rate comparison between our $\oplus$-recovery algorithm (Section~\ref{section:oplusRecovery}) and resonator networks's benchmark implementation~\cite{heddes2023torchhd}. A~$\dagger$ denotes~$0\%$ success rate out of~$10$ experiments.}\label{table:XORn500}
\end{table}

\subsection{$\Sigma$-recovery}
For~$\Sigma$-recovery we compare against exhaustive search. For parameters~$n,k,s$, we randomly choose a linear code~$\cC$ of length~$n$ and dimension~$k$, and let~$\bolds$ be the bundling of~$s$ many vectors from~$\cC$ that are chosen uniformly at random. Then, we apply our~$\Sigma$-recovery algorithm (Sec.~\ref{section:SigmaRecovery}) and an exhaustive search (which merely iterates over all of~$\cC$ and reports all codewords~$\boldc\in\cC$ such that~$\boldc\bolds^\intercal$ is sufficiently large). In either algorithm, we declare ``success'' only if the returned codewords are identical to those that were initially bundled to create~$\bolds$. 
Similarly, we used standard Python libraries, including~\texttt{galois} as mentioned earlier. The results for various parameters are reported in Table.~\ref{table:Sigma}. 

Evidently, our $\Sigma$-recovery algorithm outperforms exhaustive search in the majority of cases, with roughly an order-of-magnitude improvement, e.g., for the parameters~$(n,k,s)\in\{(200,10,3),(500,12,5),(1000,18,9)\}$. Additionally, several intriguing phenomena are apparent. 

First, exhaustive search retains roughly similar run-times for a given~$k$ and increasing values of~$s$. This is to be expected since exhaustive search iterates (almost) all~$2^k$ codewords regardless of~$s$. In contrast, our~$\Sigma$-recovery slows down by increasing~$s$ (see e.g.~$(n,k)=(500,10)$); this is to be expected as well since our~$\Sigma$-recovery algorithm has fewer entries to fix for large values of~$s$, and hence the reduction in search space size is less drastic (see, e.g., Lemma~\ref{lemma:MRRW}). These differences become less prominent as~$n$ increases, and yet, they indicate that our algorithm performs well when~$s$ is small in comparison to the information capacity of the code (i.e.,~$s<1/2+1/4\epsilon$ for an~$\epsilon$-balanced code, see e.g., Theorem~\ref{theorem:returnCnC}). 

Second, the above property of exhaustive search seems to come at a price. When~$s$ is pushed beyond the information capacity, exhaustive search tends to fail more often than our algorithm. We do not have a rigorous explanation for this phenomenon, and yet we postulate that traversing the restricted solution space~$\cR$ in our algorithm, rather than the entire code, contributes to making less errors since it decreases the chance to encounter erroneous codewords (i.e., codewords~$\boldc\in\cC$ for which~$\bolds\boldc^\intercal$ is high, while~$\boldc$ is not one of the superposed codewords in~$\bolds$).

Third, our~$\Sigma$-recovery algorithm tends to variability. Upon closer scrutiny, one finds that this is due to a dichotomy in run-times. For example, for~$(n,k,s)=(1000,14,9)$, run-times are rather consistently partitioned to~$\approx 30$\textsubscript{sec} and~$\approx 250$\textsubscript{sec}. We postulate that this phenomenon follows from concentration of the superimposed vectors, and yet a more rigorous explanation is left for future studies.

\begin{table}[]
	\centering
	\begin{tabular}{|c|c|c|cc|}
		\hline
		\multirow{2}{*}{$n$}    & \multirow{2}{*}{$k$}  & \multirow{2}{*}{$s$} & \multicolumn{2}{c|}{Success/overall, Avg. run-time (std)}                                \\ \cline{4-5} 
		&                       &                      & \multicolumn{1}{c|}{Our~$\Sigma$-recovery (Section~\ref{section:SigmaRecovery})}                                                  & Exhaustive search                                                \\ \hline
		\multirow{4}{*}{$200$}  & \multirow{2}{*}{$8$}  & $3$                  & \multicolumn{1}{c|}{$100/100, 9.7\cdot 10^{-2}$\textsubscript{sec} ($8.2\cdot 10^{-3}$)} & $100/100, 0.52$\textsubscript{sec} ($22.3\cdot 10^{-3}$) \\ \cline{3-5} 
		&                       & $5$                  & \multicolumn{1}{c|}{$100/100, 0.25$\textsubscript{sec} ($1.4\cdot 10^{-2}$)}             & $94/100, 0.52$\textsubscript{sec} ($20.4\cdot 10^{-3}$)     \\ \cline{2-5} 
		& \multirow{2}{*}{$10$} & $3$                  & \multicolumn{1}{c|}{$100/100, 0.095$\textsubscript{sec} ($8.08\cdot 10^{-3}$)}           & $100/100, 2.08$\textsubscript{sec} ($46.4\cdot 10^{-3}$)            \\ \cline{3-5} 
		&                       & $5$                  & \multicolumn{1}{c|}{$100/100, 0.32$\textsubscript{sec} ($0.53$)}                         & $81/100, 2.08$\textsubscript{sec} ($37.7\cdot 10^{-3}$)             \\ \hline
		\multirow{6}{*}{$500$}  & \multirow{2}{*}{$10$} & $5$                  & \multicolumn{1}{c|}{$100/100, 1.36$\textsubscript{sec} ($4.98\cdot 10^{-2}$)}            & $100/100, 5.09$\textsubscript{sec} ($1.37\cdot 10^{-1}$)            \\ \cline{3-5} 
		&                       & $7$                  & \multicolumn{1}{c|}{$100/100, 5.47$\textsubscript{sec} ($8.38$)}                         & $99/100, 5.12$\textsubscript{sec} ($9.7\cdot 10^{-2}$)              \\ \cline{2-5} 
		& \multirow{2}{*}{$12$} & $5$                  & \multicolumn{1}{c|}{$100/100, 1.34$\textsubscript{sec} ($4.92\cdot 10^{-2}$)}            & $100/100, 20.3$\textsubscript{sec} ($3.37 \cdot 10^{-1}$)           \\ \cline{3-5} 
		&                       & $7$                  & \multicolumn{1}{c|}{$100/100, 10.9$\textsubscript{sec} ($12$)}                           & $100/100, 20.3$\textsubscript{sec} ($2.67\cdot 10^{-1}$)            \\ \cline{2-5} 
		& \multirow{2}{*}{$14$} & $7$                  & \multicolumn{1}{c|}{$100/100, 20.1$\textsubscript{sec} ($14$)}                           & $97/100, 81.2$\textsubscript{sec} ($0.98$)                          \\ \cline{3-5} 
		&                       & $9$                  & \multicolumn{1}{c|}{$98/100, 75.4$\textsubscript{sec} ($16$)}                            & $83/100, 80$\textsubscript{sec} ($0.89$)                            \\ \hline
		\multirow{6}{*}{$1000$} & \multirow{2}{*}{$14$} & $7$                  & \multicolumn{1}{c|}{$10/10, 25.6$\textsubscript{sec} ($0.54$)}                         & $10/10, 160.3$\textsubscript{sec} ($1.4$)                         \\ \cline{3-5} 
		&                       & $9$                  & \multicolumn{1}{c|}{$10/10, 145.03$\textsubscript{sec} ($109$)}                        & $10/10, 159.08$\textsubscript{sec} ($1.4$)                        \\ \cline{2-5} 
		& \multirow{2}{*}{$16$} & $7$                  & \multicolumn{1}{c|}{$10/10, 58.05$\textsubscript{sec} ($49.79$)}                         & $10/10, 1149.16$\textsubscript{sec} ($131$)                        \\ \cline{3-5} 
		&                       & $9$                  & \multicolumn{1}{c|}{$10/10, 406.48$\textsubscript{sec} ($149$)}                         & $10/10, 1087.32$\textsubscript{sec} ($54.33$)                        \\ \cline{2-5} 
		& \multirow{2}{*}{$18$} & $9$                  & \multicolumn{1}{c|}{$10/10, 272.9$\textsubscript{sec} ($68.9$)}                        & $10/10, 2569.9$\textsubscript{sec} ($88$)                         \\ \cline{3-5} 
		&                       & $11$                 & \multicolumn{1}{c|}{$10/10, 500.9$\textsubscript{sec} ($58.2$)}                        & $10/10, 2526.3$\textsubscript{sec} ($34.3$)                       \\ \hline
	\end{tabular}
	\vspace{0.3cm}
\caption{Comparison of our~$\Sigma$-recovery algorithm (Section~\ref{section:SigmaRecovery}) against exhaustive search.}\label{table:Sigma}
\end{table}

\subsection{Retrieval capacity}
Finally, while our focus in this paper is on efficient and exact recovery algorithms, one may also wonder if using linear codes affects the retrieval capacity. To this end, we repeat the experimental setting of~\cite[Table~1]{hersche2023decoding}, as follows. Motivated by the text classification task of Amazon-12K~\cite{ganesan2021learning}, in \cite{hersche2023decoding} they consider~$F=2$ codes containing~$110$ codewords each and length~$n=256$. Then, for successive value~$N=1,2,\ldots$, they randomly choose~$N$ pairs of codewords, one from each code, bind each pair, and bundle the~$N$ resulting bound vectors. They then test their methods in retrieving all~$2N$ vectors from the bundle.

To apply linear codes in the above setting, we randomly choose a~$[256,14]_2$ code~$\cC$, and fix two subcodes~$\cC_1,\cC_2$ of dimension~$7$ so that~$\cC=\cC_1\times\cC_2$ (a code of dimension~$7$ has~$2^7=128$ codewords, and hence suffices to store~$110$ values). We choose~$N$ pairs~$\{(\boldc_{i,1},\boldc_{i,2})\}_{i=1}^N$ uniformly at random, where~$\boldc_{i,1}\in\cC_1$ and~$\boldc_{i,2}\in\cC_2$ for all~$i\in[N]$, and define the vector~$\bolds=\sum_{i=1}^N \boldc_{i,1}\oplus\boldc_{i,2}$. Since~$\boldc_{i,1}\oplus\boldc_{i,2}\in\cC$ for all~$i\in[N]$, a~$\Sigma$-recovery algorithm for~$\cC$ factorizes~$\bolds$ to the individual bound vectors~$\{\boldc_{i,1}\oplus\boldc_{i,2}\}_{i=1}^N$. Then, our~$\oplus$-recovery algorithm is applied to unbind each~$\boldc_{i,1}\oplus\boldc_{i,2}$.

\begin{table}[]
	\centering
	\begin{tabular}{|c|c|c|c|c|c|}
		\hline
		$s$           & $4$       & $5$       & $6$        \\ \hline
		Avg. run time & $0.31$\textsubscript{sec}& $0.96$\textsubscript{sec}& $7.06$\textsubscript{sec}\\ \hline
		Success rate  & $100\%$    & $99.9\%$   & $97.1\%$   \\ \hline
	\end{tabular}
	\vspace{0.3cm}
\caption{Run times and success rate over~$1000$ experiments of~$\Sigma$-recovery with parameters~$(n,k)=(256,14)$, which recreates~\cite[Table~1]{hersche2023decoding}. Additional~$\oplus$ recovery with parameters~$(n,k,F)=(256,7,2)$ took on average~$0.02$\textsubscript{sec}and succeeded~$100\%$ of the time.}\label{table:Hersche}
\end{table}

It is reported in~\cite{hersche2023decoding} that resonator networks achieved~$>99\%$ success in retrieving all~$\boldc_{i,j}$'s only for~$N=1$, thereby challenging their explaining-away functionality. A method called ``sequential decoding'' increased that number to~$5$, and a method called ``mixed decoding,'' which conducts several explaining-away instances in parallel and proceeds with the best one, raised this number to~$8$. In our experiments~(Table~\ref{table:Hersche}), for increasing values of~$N$ we measured the average time and the success rate of our~$\Sigma$-recovery algorithm across~$1000$ experiments. For all values of~$N$, further decomposition of each of the resulting~$N$ bound vectors was done using our~$\oplus$-recovery algorithm, that concluded successfully in~$100\%$ of the cases with average running time of~$\approx 0.02$\textsubscript{sec} for each bound pair. As can be seen in Table~\ref{table:Hersche}, the retrieval capacity of our methods seems to be~$s=5$ (with~$s=6$ coming close). It is an interesting direction for future work to see if similar ``mixed decoding'' ideas can increase our retrieval capacity further.

\section{Summary and Discussion}\label{section:discussion}
This paper proposes the use of linear codes in hyperdimensional computing, which are defined via Boolean linear algebra. These codes, when chosen at random, retain favorable incoherence properties relative to not-necessarily-linear ones, and hence can be safely used for HDC at no loss of performance. Yet, their algebraic structure provides an \textit{almost-instantaneous} solution to the~$\oplus$-recovery problem, and can be used to drastically accelerate exhaustive search methods in~$\Sigma$-recovery. 

On top of these benefits, linear codes also suggest efficient encoding and exponentially smaller storage space. Both of these properties are guaranteed by the generator matrix of the code. Efficient encoding is given by~$\boldx\mapsto\boldx\odot G$, which can be implemented using exclusive-OR operations only. For storage one can only store~$G$ (which contains~$k\times n$ bits), rather than the entire code (which contains~$2^k\times n$ bits), and still be able to determine if a codeword is in the code using simple linear-algebraic operations. 

An additional benefit is the ability to \textit{verify} that a linear code has certain incoherence properties, faster than a non-linear one. In critical applications one would like to make sure that the code chosen at random indeed possesses the required incoherence properties. To verify the incoherence of a non-linear code one has to compute the inner product between any two codewords, resulting in runtime that is \textit{quadratic} in the size of the code. However, due to the equivalence between incoherence and minimum distance (in all codes), and between minimum distance to minimum weight (in linear codes, see Definition~\ref{definition:epsBalanced} and the subsequent discussion), to verify incoherence in linear codes one only needs to verify minimum weight. This results in runtime that is \textit{linear} in the size of the code.

The options for future work are vast, the most glaring of which seems to be studying linear codes for approximate computation (where computations should succeed with high probability), and finding faster~$\Sigma$-recovery algorithms. Moreover, our focus here has been on simple computational tasks such as recall, recovery, and implementation of various data structures; the role of linear codes in HDC for more learning-oriented tasks such as classification, clustering, or regression~\cite{kleyko2018classification} should be addressed. Finally, coding theoretic methods might find use outside the realm of binary values, and applications of coding theoretic ideas to real-valued~\cite{thomas2021theoretical} or complex-valued~\cite{yu2022understanding} HDC remains to be studied.

\bibliographystyle{ieeetr}
\bibliography{ref.bib}

\begin{thebibliography}{10}

\bibitem{kanerva2014computing}
P.~Kanerva, ``Computing with 10,000-bit words,'' in {\em 2014 52nd annual
  Allerton conference on communication, control, and computing (Allerton)},
  pp.~304--310, IEEE, 2014.

\bibitem{plate2003holographic}
T.~A. Plate, {\em Holographic Reduced Representation: Distributed
  representation for cognitive structures}, vol.~150.
\newblock CSLI Publications Stanford, 2003.

\bibitem{gayler2004vector}
R.~W. Gayler, ``Vector symbolic architectures answer jackendoff's challenges
  for cognitive neuroscience,'' {\em arXiv preprint cs/0412059}, 2004.

\bibitem{lake2017building}
B.~M. Lake, T.~D. Ullman, J.~B. Tenenbaum, and S.~J. Gershman, ``Building
  machines that learn and think like people,'' {\em Behavioral and brain
  sciences}, vol.~40, p.~e253, 2017.

\bibitem{yu2022understanding}
T.~Yu, Y.~Zhang, Z.~Zhang, and C.~M. De~Sa, ``Understanding hyperdimensional
  computing for parallel single-pass learning,'' {\em Advances in Neural
  Information Processing Systems}, vol.~35, pp.~1157--1169, 2022.

\bibitem{imani2018hierarchical}
M.~Imani, C.~Huang, D.~Kong, and T.~Rosing, ``Hierarchical hyperdimensional
  computing for energy efficient classification,'' in {\em Proceedings of the
  55th Annual Design Automation Conference}, pp.~1--6, 2018.

\bibitem{thomas2022streaming}
A.~Thomas, B.~Khaleghi, G.~K. Jha, N.~Himayat, R.~Iyer, N.~Jain, and T.~Rosing,
  ``Streaming encoding algorithms for scalable hyperdimensional computing,''
  {\em arXiv preprint arXiv:2209.09868}, 2022.

\bibitem{rahimi2016robust}
A.~Rahimi, P.~Kanerva, and J.~M. Rabaey, ``A robust and energy-efficient
  classifier using brain-inspired hyperdimensional computing,'' in {\em
  Proceedings of the 2016 international symposium on low power electronics and
  design}, pp.~64--69, 2016.

\bibitem{imani2018hdna}
M.~Imani, T.~Nassar, A.~Rahimi, and T.~Rosing, ``Hdna: Energy-efficient dna
  sequencing using hyperdimensional computing,'' in {\em 2018 IEEE EMBS
  International Conference on Biomedical \& Health Informatics (BHI)},
  pp.~271--274, IEEE, 2018.

\bibitem{imani2017voicehd}
M.~Imani, D.~Kong, A.~Rahimi, and T.~Rosing, ``Voicehd: Hyperdimensional
  computing for efficient speech recognition,'' in {\em 2017 IEEE international
  conference on rebooting computing (ICRC)}, pp.~1--8, IEEE, 2017.

\bibitem{schmuck2019hardware}
M.~Schmuck, L.~Benini, and A.~Rahimi, ``Hardware optimizations of dense binary
  hyperdimensional computing: Rematerialization of hypervectors, binarized
  bundling, and combinational associative memory,'' {\em ACM Journal on
  Emerging Technologies in Computing Systems (JETC)}, vol.~15, no.~4,
  pp.~1--25, 2019.

\bibitem{neubert2019introduction}
P.~Neubert, S.~Schubert, and P.~Protzel, ``An introduction to hyperdimensional
  computing for robotics,'' {\em KI-K{\"u}nstliche Intelligenz}, vol.~33,
  pp.~319--330, 2019.

\bibitem{thomas2021theoretical}
A.~Thomas, S.~Dasgupta, and T.~Rosing, ``A theoretical perspective on
  hyperdimensional computing,'' {\em Journal of Artificial Intelligence
  Research}, vol.~72, pp.~215--249, 2021.

\bibitem{clarkson2023capacity}
K.~L. Clarkson, S.~Ubaru, and E.~Yang, ``Capacity analysis of vector symbolic
  architectures,'' {\em arXiv preprint arXiv:2301.10352}, 2023.

\bibitem{kleyko2023survey}
D.~Kleyko, D.~Rachkovskij, E.~Osipov, and A.~Rahimi, ``A survey on
  hyperdimensional computing aka vector symbolic architectures, part ii:
  Applications, cognitive models, and challenges,'' {\em ACM Computing
  Surveys}, vol.~55, no.~9, pp.~1--52, 2023.

\bibitem{aygun2023learning}
S.~Aygun, M.~S. Moghadam, M.~H. Najafi, and M.~Imani, ``Learning from
  hypervectors: A survey on hypervector encoding,'' {\em arXiv preprint
  arXiv:2308.00685}, 2023.

\bibitem{chang2023recent}
C.-Y. Chang, Y.-C. Chuang, C.-T. Huang, and A.-Y. Wu, ``Recent progress and
  development of hyperdimensional computing (hdc) for edge intelligence,'' {\em
  IEEE Journal on Emerging and Selected Topics in Circuits and Systems}, 2023.

\bibitem{gallant2013representing}
S.~I. Gallant and T.~W. Okaywe, ``Representing objects, relations, and
  sequences,'' {\em Neural computation}, vol.~25, no.~8, pp.~2038--2078, 2013.

\bibitem{plate1995holographic}
T.~A. Plate, ``Holographic reduced representations,'' {\em IEEE Transactions on
  Neural networks}, vol.~6, no.~3, pp.~623--641, 1995.

\bibitem{kainen1993quasiorthogonal}
P.~C. Kainen and V.~Kurkova, ``Quasiorthogonal dimension of euclidean spaces,''
  {\em Applied mathematics letters}, vol.~6, no.~3, pp.~7--10, 1993.

\bibitem{frady2020resonator}
E.~P. Frady, S.~J. Kent, B.~A. Olshausen, and F.~T. Sommer, ``Resonator
  networks, 1: An efficient solution for factoring high-dimensional,
  distributed representations of data structures,'' {\em Neural computation},
  vol.~32, no.~12, pp.~2311--2331, 2020.

\bibitem{kleyko2023efficient}
D.~Kleyko, C.~Bybee, P.-C. Huang, C.~J. Kymn, B.~A. Olshausen, E.~P. Frady, and
  F.~T. Sommer, ``Efficient decoding of compositional structure in holistic
  representations,'' {\em Neural Computation}, vol.~35, no.~7, pp.~1159--1186,
  2023.

\bibitem{roth2006introduction}
R.~M. Roth, ``Introduction to coding theory,'' {\em IET Communications},
  vol.~47, no.~18-19, p.~4, 2006.

\bibitem{heddes2023torchhd}
M.~Heddes, I.~Nunes, P.~Verg{\'e}s, D.~Kleyko, D.~Abraham, T.~Givargis,
  A.~Nicolau, and A.~Veidenbaum, ``Torchhd: An open source python library to
  support research on hyperdimensional computing and vector symbolic
  architectures,'' {\em Journal of Machine Learning Research}, vol.~24,
  no.~255, pp.~1--10, 2023.

\bibitem{kent2020resonator}
S.~J. Kent, E.~P. Frady, F.~T. Sommer, and B.~A. Olshausen, ``Resonator
  networks, 2: Factorization performance and capacity compared to
  optimization-based methods,'' {\em Neural computation}, vol.~32, no.~12,
  pp.~2332--2388, 2020.

\bibitem{hersche2023decoding}
M.~Hersche, Z.~Opala, G.~Karunaratne, A.~Sebastian, and A.~Rahimi, ``Decoding
  superpositions of bound symbols represented by distributed representations,''
  2023.

\bibitem{naor1990small}
J.~Naor and M.~Naor, ``Small-bias probability spaces: Efficient constructions
  and applications,'' in {\em Proceedings of the twenty-second annual ACM
  symposium on Theory of computing}, pp.~213--223, 1990.

\bibitem{alon1992simple}
N.~Alon, O.~Goldreich, J.~H{\aa}stad, and R.~Peralta, ``Simple constructions of
  almost k-wise independent random variables,'' {\em Random Structures \&
  Algorithms}, vol.~3, no.~3, pp.~289--304, 1992.

\bibitem{ben2009constructing}
A.~Ben-Aroya and A.~Ta-Shma, ``Constructing small-bias sets from
  algebraic-geometric codes,'' in {\em 2009 50th Annual IEEE Symposium on
  Foundations of Computer Science}, pp.~191--197, IEEE, 2009.

\bibitem{ta2017explicit}
A.~Ta-Shma, ``Explicit, almost optimal, epsilon-balanced codes,'' in {\em
  Proceedings of the 49th Annual ACM SIGACT Symposium on Theory of Computing},
  pp.~238--251, 2017.

\bibitem{berlekamp1978inherent}
E.~Berlekamp, R.~McEliece, and H.~Van~Tilborg, ``On the inherent intractability
  of certain coding problems (corresp.),'' {\em IEEE Transactions on
  Information Theory}, vol.~24, no.~3, pp.~384--386, 1978.

\bibitem{becker2012decoding}
A.~Becker, A.~Joux, A.~May, and A.~Meurer, ``Decoding random binary linear
  codes in 2 n/20: How 1+ 1= 0 improves information set decoding,'' in {\em
  Advances in Cryptology--EUROCRYPT 2012: 31st Annual International Conference
  on the Theory and Applications of Cryptographic Techniques, Cambridge, UK,
  April 15-19, 2012. Proceedings 31}, pp.~520--536, Springer, 2012.

\bibitem{nachmani2016learning}
E.~Nachmani, Y.~Be'ery, and D.~Burshtein, ``Learning to decode linear codes
  using deep learning,'' in {\em 2016 54th Annual Allerton Conference on
  Communication, Control, and Computing (Allerton)}, pp.~341--346, IEEE, 2016.

\bibitem{chen2005reduced}
J.~Chen, A.~Dholakia, E.~Eleftheriou, M.~P. Fossorier, and X.-Y. Hu,
  ``Reduced-complexity decoding of ldpc codes,'' {\em IEEE transactions on
  communications}, vol.~53, no.~8, pp.~1288--1299, 2005.

\bibitem{frady2018theory}
E.~P. Frady, D.~Kleyko, and F.~T. Sommer, ``A theory of sequence indexing and
  working memory in recurrent neural networks,'' {\em Neural Computation},
  vol.~30, no.~6, pp.~1449--1513, 2018.

\bibitem{hughes1996nearly}
B.~L. Hughes and A.~B. Cooper, ``Nearly optimal multiuser codes for the binary
  adder channel,'' {\em IEEE Transactions on Information Theory}, vol.~42,
  no.~2, pp.~387--398, 1996.

\bibitem{liva2021coding}
G.~Liva and Y.~Polyanskiy, ``On coding techniques for unsourced
  multiple-access,'' in {\em 2021 55th Asilomar Conference on Signals, Systems,
  and Computers}, pp.~1507--1514, IEEE, 2021.

\bibitem{langenegger2023memory}
J.~Langenegger, G.~Karunaratne, M.~Hersche, L.~Benini, A.~Sebastian, and
  A.~Rahimi, ``In-memory factorization of holographic perceptual
  representations,'' {\em Nature Nanotechnology}, vol.~18, no.~5, pp.~479--485,
  2023.

\bibitem{Hostetter_Galois_2020}
M.~Hostetter, ``{Galois: A performant NumPy extension for Galois fields},'' 11
  2020.

\bibitem{ganesan2021learning}
A.~Ganesan, H.~Gao, S.~Gandhi, E.~Raff, T.~Oates, J.~Holt, and M.~McLean,
  ``Learning with holographic reduced representations,'' {\em Advances in
  Neural Information Processing Systems}, vol.~34, pp.~25606--25620, 2021.

\bibitem{kleyko2018classification}
D.~Kleyko, A.~Rahimi, D.~A. Rachkovskij, E.~Osipov, and J.~M. Rabaey,
  ``Classification and recall with binary hyperdimensional computing: Tradeoffs
  in choice of density and mapping characteristics,'' {\em IEEE transactions on
  neural networks and learning systems}, vol.~29, no.~12, pp.~5880--5898, 2018.

\bibitem{grignetti1964note}
M.~C. Grignetti, ``A note on the entropy of words in printed english,'' {\em
  Information and Control}, vol.~7, no.~3, pp.~304--306, 1964.

\bibitem{cover1999elements}
T.~M. Cover, {\em Elements of information theory}.
\newblock John Wiley \& Sons, 1999.

\bibitem{ziv1977universal}
J.~Ziv and A.~Lempel, ``A universal algorithm for sequential data
  compression,'' {\em IEEE Transactions on information theory}, vol.~23, no.~3,
  pp.~337--343, 1977.

\end{thebibliography}
\appendix
\section{Appendix.}

\subsection{Representing items using (few) bits}\label{section:representation}
Some of the examples in Section~\ref{section:examples} (sets, vectors, etc.) require that items be represented using a small number of bits, prior to being encoded to HD vectors using a generator matrix. Below we discuss a few potential approaches.

\noindent\textbf{Using the item's ``natural'' representation}. Most conceivable data types have some form of ``natural'' representation (e.g., ASCII code for text, or RGB for pixels). A wasteful yet simple method for representing items with bits is using these naturally occurring representations. For example, to store an English sentence as a set of words one can use the $7$-bit ASCII representation of the letters in those words. By restricting the supported length of words to, say, $20$~English letters, every word is naturally associated with~$140$ bits, and hence a code of dimension~$k=140$ suffices. This is a highly redundant representation since a vast majority of the~$2^{140}$ binary vectors of length~$140$ would not be associated with any existing English word.

\noindent\textbf{Indexing.} If an indexed dictionary of all possible items is available, one can use the item's index in that dictionary as its binary representation. For example, the Oxford English dictionary consists of~$171,476$ words in current use, which can be indexed using~$18$ bits. This is drastically more compact than the previous option; for example, the experiments at the bottom of Table~\ref{table:Sigma} show that using this~$18$-bit representation with~$1000$ bit codewords can support efficient~$\Sigma$-recovery of sets of size~$9$ and~$11$ English words. This is still somewhat wasteful; one can use fewer than~$18$ bits using compression algorithms, as explained next.

\noindent\textbf{Compression algorithms.} A classic result by Claude Shannon asserts that there exist encoding algorithms for compressing English words to~$11.82$ bits on average, which was later improved~$9.8$ bits on average~\cite{grignetti1964note}. This is a special case the famous \textit{source coding theorem}~\cite{cover1999elements}, which states that any information source~$X$ (seen as a random variable) could be compressed down to~$H(X)$ many bits on average, where~$H$ is information entropy. Optimal compression algorithms exist, such as the Lempel-Ziv algorithm~\cite{ziv1977universal}. Therefore, by ignoring rare words, one can reduce the representation size to below~$18$.

\noindent\textbf{Random.} If neither dictionary nor compression algorithms are available in some setting, one can resort to random coding. Specifically, one would choose random codewords of length~$k$, where~$k$ is such that the universe of all relevant items is of size at most~$2^k$. For example, one can map each encountered English word to a uniformly random~$18$-bit string. Notice that this method would require some bookkeeping for memorizing which word was mapped to which~$18$-bit string. Yet, we emphasize that even this method is far more efficient than memorizing all HD vectors assigned so far, since HD vectors are normally much longer than~$18$ bits in any domain known to the author.

\subsection{Encoding without choosing a code a priori.}\label{section:onthefly}
For ease of analysis, in the present paper it is assumed that the number of elements is known a priori, and one can choose a random~$[n,k]_2$ code~$\cC$ so that the number of elements is at most~$2^k$. In cases that the number of elements is not known a priori, one can choose the code~$\cC$ ``on the fly,'' i.e., in a gradual manner, as follows:
\begin{itemize}
	\item The first element to be encoded is mapped to the HD vector~$\1$. 
	\item The second element is mapped a uniformly random vector~$\boldv_1$.
	\item The third element is mapped a uniformly random vector~$\boldv_2$.
	\item The fourth element is mapped to the vector~$\boldv_1\oplus\boldv_2$.
	\item The fifth element is mapped to a uniformly random vector~$\boldv_3$.
	\item The sixth to eighth vectors are mapped to different~$\bF_2$-linear combinations of~$\boldv_1,\boldv_2,\boldv_3$ that were not already used.
	\item The ninth element is mapped to a uniformly random vector~$\boldv_4$.
	\item The tenth to sixteenth vectors are mapped different~$\bF_2$-linear combinations of~$\boldv_1,\boldv_2,\boldv_3, \boldv_4$ that were not already used.
	\item $\ldots$.
\end{itemize}
More generally, the second element is mapped to a uniformly random vector~$\boldv_1$, and for~$j\ge 1$, the~$(2^j+1)$'th element is mapped to a uniformly random vector~$\boldv_{j+1}$. All other elements are mapped to unique~$\bF_2$-linear combinations of the~$\boldv_i$'s chosen so far. In a way, one can think of the~$\boldv_i$'s as rows of a generator matrix~$G$, whose span is the eventual code~$\cC$ which we use. Alternatively, one can think of the above procedure as on-the-fly indexing, which is followed by encoding using a generator matrix. That is, each item is mapped to an index of length~$\ceil {\log(s)}$, where~$s$ is the number of items encountered so far. Whenever the number of items exceeds an integer power of two, all previous indices are appended with~$1$ (i.e., the zero element of~$\bF_2$ in its~$\{\pm1\}$ representation), and each new item is mapped to indices which end with~$-1$ (i.e., the nonzero element of~$\bF_2$). At any point of this process, an index~$\boldi$ is encoded to the codeword~$\boldi G$, where~$G$ is a matrix whose rows are the~$\boldv_i$'s chosen so far.
	
\end{document}